\setlist[enumerate]{parsep=4pt}
\newtheorem{theorem}{Theorem}
\newtheorem{lemma}{Lemma}
\newtheorem{assumption}{Assumption}
\newtheorem{proposition}{Proposition}
\newtheorem{remark}{Remark}
\DeclareMathOperator*{\argmin}{argmin}
\newcommand\reallywidehat[1]{%
\savestack{\tmpbox}{\stretchto{%
  \scaleto{%
    \scalerel*[\widthof{\ensuremath{#1}}]{\kern-.6pt\bigwedge\kern-.6pt}%
    {\rule[-\textheight/2]{1ex}{\textheight}}%
  }{\textheight}%
}{0.5ex}}%
\stackon[1pt]{#1}{\tmpbox}%
}
\title{Optimal Empirical Risk Minimization under Temporal Distribution Shifts}
\author{Yujin Jeong, Ramesh Johari, Dominik Rothenh\"ausler, Emily Fox}
\begin{document}

\maketitle

\begin{abstract}
Temporal distribution shifts pose a key challenge for machine learning models trained and deployed in dynamically evolving environments. This paper introduces \textbf{RIDER} (RIsk minimization under Dynamically Evolving Regimes) which derives optimally-weighted empirical risk minimization procedures under temporal distribution shifts. 
Our approach is theoretically grounded in the random distribution shift model, where random shifts arise as a superposition of numerous unpredictable changes in the data-generating process. We show that common weighting schemes, such as pooling all data, exponentially weighting data, and using only the most recent data, emerge naturally as special cases in our framework. We demonstrate that RIDER consistently improves out-of-sample predictive performance when applied as a fine-tuning step on the Yearbook dataset, across a range of benchmark methods in Wild-Time. Moreover, we show that RIDER outperforms standard weighting strategies in two other real-world tasks: predicting stock market volatility and forecasting ride durations in NYC taxi data.
\end{abstract}

\section{Introduction}

Temporal distribution shifts pose a fundamental challenge for machine learning models trained and deployed in dynamic environments. These shifts occur when the data-generating distribution evolves over time due to a wide range of factors, including shifting societal trends, evolving language, news events, emerging ideas and techniques, advancements in medical practice, climate change, and fluctuations in infectious disease prevalence. Recent studies have demonstrated the impact of temporal distribution shifts across diverse domains, such as large language models, drug discovery, and visual recognition 
\citep{yao2023wildtime, lazaridou2021mind, huang2021therapeutics, guo2022evaluation}.

In this paper, we introduce \textbf{RIDER} (RIsk minimization under Dynamically Evolving Regimes), a principled approach for optimally weighted empirical risk minimization under temporal distribution shifts.  
Our method is theoretically grounded in the framework of random distribution shifts, which models the cumulative effect of numerous small, stochastic changes in the data-generating process \citep{jeong2024oodgeneralization,bansak2024learning}. The idea is that, even after accounting for predictable sources of temporal variation---such as trends, seasonality or identifiable structural changes---residual shifts often persist. For example, in financial markets, while some sources of abrupt distribution shifts such as market announcements or even seasonal trends are predictable, the residual market behavior is also influenced by countless smaller idiosyncratic perturbations. Once predictable non-stationarities are accounted for, these residual shifts can often be effectively modeled as random. 

Our work is most closely related to \cite{jeong2024oodgeneralization}, but differs in a key respect: we derive an out-of-distribution (OOD) generalization bound that explicitly accounts for the trade-off between sampling variability and distributional variability (analogous to trade-off between forgetting and retention). This trade-off is essential in our setting, where the most recent time period—although most relevant for prediction—often contains relatively few observations. Without incorporating this trade-off, the method in \cite{jeong2024oodgeneralization} would tend to heavily weight recent data, increasing the risk of overfitting or unstable estimates. In addition, \cite{jeong2024oodgeneralization} addresses the broader problem of weighted empirical risk minimization, whereas our approach incorporates time-series-specific considerations. Weight estimation is improved by leveraging stationarity. We model the weights using time series techniques to derive explicit parametric forms for downweighting strategies, which can be subsequently optimized via cross-validation.

Figure~\ref{fig:diagram} illustrates the random temporal distribution shift model. 
At each time point, we collect a training dataset consisting of features and corresponding outcomes, aiming to build a predictive model that generalizes effectively to future data. The underlying data distribution evolves over time. To address the challenge of temporal shifts, we employ a weighted empirical risk minimization framework, where we assign different weights to each past training dataset. These weights should reflect how informative each past dataset is likely to be for future predictions.  

In this paper, we develop a novel approach to predictive modeling that addresses residual, random temporal distribution shifts. 
Our key contributions are as follows:
\begin{enumerate}
\item \textbf{Weighted ERM: Formulation and analysis.}  We formulate the prediction problem of interest as a problem of choosing {\em weights} for ERM over the distributions in past time steps to predict outcomes at the next time step.  Informally, distributions from more (resp., less) similar time periods should be weighted more (resp., less).  
We develop a {\em principled method} -- RIDER -- for estimation of these weights from data, and establish its theoretical out-of-distribution performance under random temporal shifts.

\item \textbf{Parametric form of optimal weights.} We show that, in special cases, the optimal weighting admits a parametric form that naturally connects to commonly used practices, including: pooling data equally; exponentially down-weighting older data; or using only the most recent data.

\item \textbf{Empirical performance.}  The method we develop can serve as a complementary tool to existing approaches to account for residual temporal shifts.
We demonstrate that our method can be applied as a fine-tuning step on top of a wide range of approaches in the Wild-Time benchmark \citep{yao2023wildtime}, consistently improving performance across methods.

We also show that RIDER can be successfully used on its own.  In particular, we show our method achieves strong performance in two real-world applications: (1) predicting volatility in financial time series data; and (2) forecasting ride duration in New York City taxi data.
\end{enumerate}

\section{Model and Problem Definition}\label{sec:problem-definition}

\subsection{Dynamically Evolving Regimes as Distribution Shifts} 

Consider a sequential supervised learning problem where, at each time point $t$, we have $n_t$ i.i.d.~samples from the distribution, $\mathbb{P}^t$: $\mathcal{D}_t = \{(X_{ti}, Y_{ti})_{i=1}^{n_t}\}$. 
Our goal is to learn a mapping $f(\cdot; \theta_t): X_t \xrightarrow[]{}Y_t$, where $\theta_t$ is an unknown parameter of the model that must be estimated from data. Having collected data sets over $t = 1, \dots, T$,  we aim to minimize the one-step-ahead out-of-distribution risk,
\begin{equation*}
    R_{T+1}(\theta) = \mathbb{E}^{T+1}[\mathcal{L}(Y, f(X;\theta))],
\end{equation*}
where $\mathbb{E}^{T+1}$ denotes the expectation under distribution $\mathbb{P}^{T+1}$, and $\mathcal{L}$ is a loss function. 

The key challenge in a time series setting arises from temporal distribution shifts; in particular, 
the distributions $\mathbb{P}^1, \dots, \mathbb{P}^T, \mathbb{P}^{T+1}$ will typically differ. Often, distributions closer in time may be more similar than those further apart.

We adopt an Empirical Risk Minimization (ERM) approach, where the parameters $\theta_{T+1}$ are estimated by minimizing an empirical risk estimator $\hat{R}_{T+1}(\theta)$. To account for potential distribution shifts, we propose using a weighted ERM approach:
\begin{equation}\label{eq:werm}
     \hat{R}_{T+1}(\theta) = \sum_{k=1}^K \beta_k\hat{\mathbb{E}}^{T+1-k}[\mathcal{L}(Y, f(X;\theta))],
\end{equation}
where $K$ is a fixed window size and $\hat{\mathbb{E}}^t$ represents the empirical mean on the dataset $\mathcal{D}_t$. The weights $\beta \in\mathbb{R}^K$ should capture similarity between the previous $K$ data distributions and the unobserved target distribution. Naive approaches include pooling all data equally ($\beta_k = 1/K$ for all $k$); down-weighting older data (e.g., $\beta_k = \theta^{k-1}/\sum_{k=1}^K \theta^{k-1}$), or using only the most recent data ({$\beta_1 = 1$; $\beta_2,\ldots, \beta_{K} = 0$). Throughout the paper, we discuss how to find optimal weights $\beta$. 

\subsection{Random Distribution Shifts} \label{sec:model}

In this section, we discuss the random distribution shift model developed in \cite{jeong2024oodgeneralization} that forms the theoretical foundation for our approach. In contrast to \cite{jeong2024oodgeneralization}, we consider the setting where both sampling variability and distributional variability are of the same order. Furthermore, we develop our methodology for the case of temporal distribution shifts.

Distribution shift models often impose some structure on the likelihood ratio between the new and original distribution. For example, under {\em covariate shift} \citep{shimodaira2000improving}, the likelihood ratio is a function of observed covariates: 
\begin{equation*}
     \frac{d \mathbb{P}^{t+1}}{d \mathbb{P}^t}(x,y) = w(x),
\end{equation*}
for some weight function $w(x) \ge 0$.  
At a high level, this assumes that covariates \emph{explain away} the distribution shift. 
In {\em domain invariant representation learning} \citep{ganin2015unsupervised}, one often aims to find a representation $\phi(X)$ that is predictive of the target, such that $\mathbb{P}^{t}(\phi(X) = \cdot) = \mathbb{P}^{s}(\phi(X) = \cdot)$, which corresponds to a constant likelihood ratio. 

The random distribution shift model also models distribution shifts via an assumption on the likelihood ratio.  However, instead of modeling the likelihood ratio as a function of observed variables, it assumes that the distribution shift arises through the combination of unpredictable, random changes that affect both the covariates and the outcomes.  In particular, for each $x,y$, we assume that 
\begin{equation*}
    \frac{d \mathbb{P}^{t}}{d \mathbb{P}^0}(x,y) \sim W_j^t,
\end{equation*}
for some random weight $W_j^t \ge 0$ with $E[W_j^t] = 1$. We make this more rigorous below.

Without loss of generality, we construct distributional perturbations for uniform distributions on $[0,1]$. This is justified by a result from probability theory \citep{dudley2018real} that any random variable or random vector $D$ on a finite or countably infinite dimensional probability space can be written as a measurable function $D \stackrel{d}{=} h(U)$, where $U$ is a uniform random variable on $[0,1]$. Through this transformation $h$, we can generalize the construction via: for any measurable set $\bullet$, 
\begin{equation*}
        \mathbb{P}^{t}(D \in \bullet) = \mathbb{P}^{t}(h(U) \in \bullet).
\end{equation*}

Let $\mathbb{P}^0$ denote the unknown common parent distribution. We now construct the perturbed distribution $\mathbb{P}^{t}$ for a random variable that follows a uniform distribution under $\mathbb{P}^0$. Denote $m$ bins $I_j = [(j-1)/m, j/m]$ for $j = 1, \dots, m$. Let $W^t_1, \dots, W^t_m$ be i.i.d.\ positive random variables with finite variance; without loss of generality, we assume $E[W_j^t] = 1$.
We assume the randomly perturbed distribution $\mathbb{P}^{t}$ is obtained from $\{ W_j^t \}_{j=1}^m$ as follows: for $u \in I_j$,
\begin{equation*}
    \frac{d \mathbb{P}^{t}}{d \mathbb{P}^0}(u) \sim W^t_j.
\end{equation*}
Assuming that the weights $W^t_1, \dots, W^t_m$ are i.i.d.\ is a strong assumption; empirical evidence supporting a similar distribution shift model in the context of replication studies is provided in \cite{jin2024beyond}.
After generating a series of perturbed distributions $\mathbb{P}^1 ,\dots, \mathbb{P}^{T+1}$, we draw $n_t$ i.i.d. samples $\mathcal{D}_t = \{D_{ti}\}_{i=1}^{n_t}$ from $\mathbb{P}^{t}$ for each $t= 1, \dots, T$, conditionally on $W = \{\{W_j^t\}_{j=1}^m\}_{t=1}^{T}$.

\paragraph{Summary of Notation.} Let $\mathbb{P}^0$ be the parent distribution on $\mathcal{D}$, and $\mathbb{P}^t$ be the perturbed distribution on $\mathcal{D}$ conditioned on $(W_j^t)_{j = 1}^m$. Denote $P$ as the marginal distribution of $\{\{D_{ti}\}_{i=1}^{n_t}, \{W_j^t\}_{j=1}^m\}\}_{t=1}^{T+1}$. We draw an i.i.d. sample $\mathcal{D}_{t} = \{D_{ti}\}_{i=1}^{n_t}$ from $\mathbb{P}^t$, conditioned on $(W_j^t)_{j = 1}^m$. 
We denote $\mathbb{E}^0$ as the expectation under $\mathbb{P}^0$, $\mathbb{E}^{t}$ as the expectation under $\mathbb{P}^{t}$, and $\hat{\mathbb{E}}^t$ as the sample average of $\mathcal{D}_t$: $\hat{\mathbb{E}}^t[\phi(D)] = \sum_{i=1}^{n_t} \phi(D_{ti})/n_t$. 
We write $\text{Var}_{\mathbb{P}^0}$ for the variance under $\mathbb{P}^0$ and $\text{Var}_{\mathbb{P}^t}$ for the variance under $\mathbb{P}^t$ conditioned on $(W_j^t)_{j =1}^m$.

\begin{figure*}[t]
    \centering
    \includegraphics[width=0.7\linewidth]{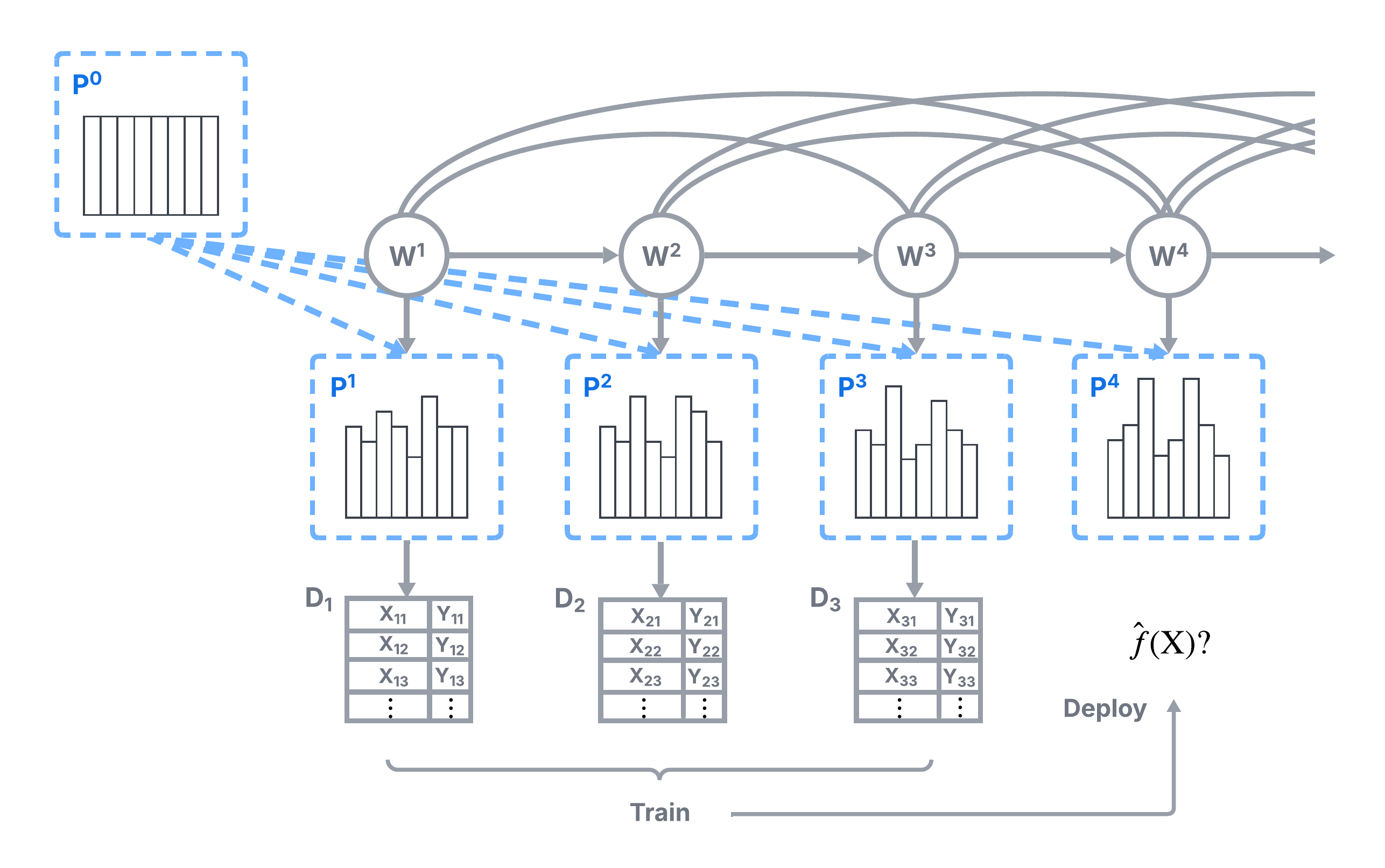}
    \caption{Diagram illustrating the data-generating process under the random temporal distribution shift model where random weights follow an AR(3) process.  }
    \label{fig:diagram}
\end{figure*}

\subsection{Optimal Weights for Weighted ERM}

We now analyze the asymptotic behavior of weighted ERM under the random distribution shift model. 
For any fixed non-negative weights $\beta$, let
\begin{equation*}
    \hat{\theta}_{T+1}^{\beta} = \argmin_{\theta \in \Theta} \sum_{k=1}^K \beta_k \hat{\mathbb{E}}^{T+1-k}[\mathcal{L}(Y, f(X; \theta))].
\end{equation*}
With $D = (X,Y)$, we define $L(\theta, D) = \mathcal{L}(Y, f(X; \theta))$.
We focus on the excess risk:
\begin{equation*}
    \mathbb{E}^{T+1}[L(\hat \theta^{\beta}_{T+1},D)]  -\mathbb{E}^{T+1}[L( \theta_{T+1},D)].
\end{equation*}
We first state the necessary regularity conditions.

\begin{assumption}[Regularity conditions]\label{ass:reg}
     For each $\theta$ in an open subset of $\Theta$, let $\theta \mapsto \partial_\theta L(\theta, d)$ be twice continuously differentiable in $\theta$ for every $d$. Assume that the matrix $\mathbb{E}^0[\partial_\theta^2 L(\theta, D)]$ exists and is nonsingular. Assume that the third partial derivatives of $\theta \mapsto L(\theta,D)$ are dominated by a fixed function $h(D)$ for every $\theta$ in a neighborhood of $\theta^0:= \argmin_{\theta \in \Theta}\mathbb{E}^0[L(\theta, D)]$. We assume that $\partial_\theta L(\theta^0,D)$, $\partial_\theta^2 L(\theta^0,D)$ and $h(D)$ are square-integrable under $\mathbb{P}^0$. 
\end{assumption}

\begin{assumption}[Sampling uncertainty and distributional uncertainty are of the same order]\label{ass:same-order} For a considered time point $t$, let $N = \sum_{t'=t-K}^{t-1}n_{t'}$ be the total sample size over the $K$-window of datasets. Assume that there exists a sequence $m = m(N)$ such that
    \begin{equation*}
        (r_{t-1}, \dots, r_{t-K})  = \lim_{N\xrightarrow[]{}\infty} \left(\frac{m}{n_{t-1}}, \dots, \frac{m}{n_{t-K}}\right) \in (0, \infty).
    \end{equation*}   
\end{assumption}

\begin{theorem}[Out-of-distribution error of weighted ERM]\label{thm:ood-error}
    Assume the distributional perturbation model mentioned in Section~\ref{sec:model}.  
    Suppose Assumption~\ref{ass:reg}, Assumption~\ref{ass:same-order} with $t = T+1$, and that $\hat{\theta}_{T+1}^{\beta} - \theta^0 = o_P(1)$, $ \theta_{T+1} - \theta^0 = o_P(1)$. Then,
\begin{equation*}
 \sqrt{m(N)} \left( \mathbb{E}^{T+1}[L( \hat{\theta}_{T+1}^{\beta}, D)]  -\mathbb{E}^{T+1}[L( \theta_{T+1}, D)] \right) \stackrel{d}{\rightarrow} \mathcal{E}
\end{equation*}
for some random variable $\mathcal{E}$, whose mean $\mathbb{E}[\mathcal{E}]$ can be decomposed into
\begin{align*}
    & \Big(\underbrace{E[(W^{T+1} - \sum_{k=1}^K \beta_k W^{T+1-k})^2]}_{\substack{\text{residual distributional} \\ \text{variation}}} + \underbrace{\sum_{k=1}^K \beta_k^2 r_{T+1-k}}_{\substack{\text{sampling} \\ \text{variation}}} \Big)\cdot \underbrace{\mu}_{\substack{\text{problem} \\\text{difficulty}}}, 
\end{align*}
where $\mu = \mathrm{Tr}( \mathbb{E}^0[\partial_\theta^2 L(\theta^0, D)]^{-1} \mathrm{Var}_{\mathbb{P}^0}(  \partial_\theta L(\theta^0, D)) ).$
\end{theorem}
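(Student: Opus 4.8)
The plan is to reduce the excess risk to a quadratic form in the estimation error $\hat\theta^\beta_{T+1}-\theta_{T+1}$, to identify the limit law of that error via a standard $M$-estimation expansion, and finally to read off the mean of the resulting quadratic form. Throughout write $g(D)=\partial_\theta L(\theta^0,D)$ and $H_0=\mathbb{E}^0[\partial_\theta^2 L(\theta^0,D)]$, and recall $\mathbb{E}^0[g(D)]=0$ since $\theta^0$ minimizes the parent risk. First I would Taylor-expand $\theta\mapsto \mathbb{E}^{T+1}[L(\theta,D)]$ about the population target $\theta_{T+1}=\argmin_\theta \mathbb{E}^{T+1}[L(\theta,D)]$. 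Because the first-order term vanishes at this minimizer, Assumption~\ref{ass:reg} (domination of third derivatives, square-integrability) together with the consistency hypotheses $\hat\theta^\beta_{T+1}-\theta^0=o_P(1)$ and $\theta_{T+1}-\theta^0=o_P(1)$ gives
\begin{equation*}
\mathbb{E}^{T+1}[L(\hat\theta^\beta_{T+1},D)]-\mathbb{E}^{T+1}[L(\theta_{T+1},D)]=\tfrac12(\hat\theta^\beta_{T+1}-\theta_{T+1})^\top H_0(\hat\theta^\beta_{T+1}-\theta_{T+1})+o_P\big(\|\hat\theta^\beta_{T+1}-\theta_{T+1}\|^2\big),
\end{equation*}
using also that $\mathbb{E}^{T+1}[\partial_\theta^2 L(\theta^0,D)]\to H_0$ as the perturbed law concentrates on $\mathbb{P}^0$. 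It thus suffices to find the joint limit of the estimation error.

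Next I would expand both estimators, working under the normalization $\sum_k\beta_k=1$ shared by all the weighting schemes of interest (which also makes the limiting Hessian of $\hat\theta^\beta_{T+1}$ equal to $H_0$). From the first-order conditions $\sum_k\beta_k\hat{\mathbb{E}}^{T+1-k}[\partial_\theta L(\hat\theta^\beta_{T+1},D)]=0$ and $\mathbb{E}^{T+1}[\partial_\theta L(\theta_{T+1},D)]=0$, the usual expansion yields $\sqrt{m}(\hat\theta^\beta_{T+1}-\theta_{T+1})=-H_0^{-1}S_m+o_P(1)$ with $m=m(N)$ and
\begin{equation*}
S_m:=\sqrt{m}\Big(\textstyle\sum_{k=1}^K\beta_k\hat{\mathbb{E}}^{T+1-k}[g(D)]-\mathbb{E}^{T+1}[g(D)]\Big)=\underbrace{\sqrt{m}\sum_k\beta_k(\hat{\mathbb{E}}^{T+1-k}-\mathbb{E}^{T+1-k})[g]}_{\text{sampling}}+\underbrace{\sqrt{m}\Big(\sum_k\beta_k\mathbb{E}^{T+1-k}[g]-\mathbb{E}^{T+1}[g]\Big)}_{\text{distributional}}.
\end{equation*}
For the sampling part I condition on the weights $W$: the summands are then independent across $k$ and across $i$ with conditional mean zero, so a conditional Lindeberg CLT combined with $m/n_{T+1-k}\to r_{T+1-k}$ (Assumption~\ref{ass:same-order}) and $\mathrm{Var}_{\mathbb{P}^{T+1-k}}(g)\to\mathrm{Var}_{\mathbb{P}^0}(g)$ produces a Gaussian limit with covariance $\sum_k\beta_k^2 r_{T+1-k}\,\mathrm{Var}_{\mathbb{P}^0}(g)$, the sampling-variation term.

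For the distributional part I use that on bin $I_j$ the likelihood ratio $d\mathbb{P}^t/d\mathbb{P}^0$ equals $W^t_j$, so with $c_j=\mathbb{E}^0[g(D)\mathbb{1}\{U\in I_j\}]$ this part equals $\sqrt{m}\sum_j V_j c_j$, where $V_j=\sum_k\beta_k W^{T+1-k}_j-W^{T+1}_j$; its mean is zero because $\sum_j c_j=\mathbb{E}^0[g]=0$. As the weight vectors are i.i.d.\ across bins $j$, the $V_j$ are i.i.d., a triangular-array CLT applies, and the covariance is $\mathrm{Var}(V_j)\cdot\big(m\sum_j c_j c_j^\top\big)=E[(W^{T+1}-\sum_k\beta_k W^{T+1-k})^2]\cdot\big(m\sum_j c_jc_j^\top\big)$, where the last equality uses $\sum_k\beta_k=1$ so that $V_j$ is centered. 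The step doing the real work is the lemma $m\sum_j c_jc_j^\top\to\mathrm{Var}_{\mathbb{P}^0}(g)$: writing $D\stackrel{d}{=}h(U)$ and $\tilde g=g\circ h$, each $c_j=\int_{I_j}\tilde g$ is a bin-average of order $1/m$, so $m\sum_j c_jc_j^\top$ is an $L^2$/Riemann-sum approximation of $\int_0^1\tilde g\tilde g^\top=\mathbb{E}^0[gg^\top]=\mathrm{Var}_{\mathbb{P}^0}(g)$, licensed by the square-integrability in Assumption~\ref{ass:reg}. Since the sampling part is conditionally centered given $W$, the two parts are asymptotically independent, so $S_m\rightarrow \mathcal{N}(0,\Sigma)$ with $\Sigma=\big(E[(W^{T+1}-\sum_k\beta_k W^{T+1-k})^2]+\sum_k\beta_k^2 r_{T+1-k}\big)\,\mathrm{Var}_{\mathbb{P}^0}(g)$. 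Plugging the error $-H_0^{-1}S$ into the quadratic form gives $\mathcal{E}=\tfrac12 S^\top H_0^{-1}S$ with $S\sim\mathcal{N}(0,\Sigma)$, whose mean is $\tfrac12\mathrm{Tr}(H_0^{-1}\Sigma)=\tfrac12\big(E[(W^{T+1}-\sum_k\beta_k W^{T+1-k})^2]+\sum_k\beta_k^2 r_{T+1-k}\big)\,\mathrm{Tr}(H_0^{-1}\mathrm{Var}_{\mathbb{P}^0}(g))$, which reproduces the asserted decomposition, with $\mu=\mathrm{Tr}(H_0^{-1}\mathrm{Var}_{\mathbb{P}^0}(g))$ and the leading constant fixed by the chosen normalization of $\mathcal{E}$.

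The main obstacle is the genuinely two-scale nature of the limit: $m=m(N)$ grows together with the sample sizes, so the sampling CLT must be made uniform enough to hold jointly with the bin-level CLT for the distributional part, and the triangular-array CLT for $\sqrt{m}\sum_j V_j c_j$ must accommodate summand vectors $c_j$ that shrink with $m$ (a Lindeberg/Lyapunov condition verified through $\tilde g\in L^2$). Cleanly decoupling the two sources of randomness — carried out by conditioning on $W$ and matching characteristic functions — is the delicate bookkeeping step; by contrast, the temporal dependence among $\{W^t_j\}_t$ within a fixed bin is harmless, since independence \emph{across} bins is all the CLT requires.
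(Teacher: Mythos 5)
Your proposal is correct and follows essentially the same route as the paper: asymptotic linearity of both $\hat\theta^\beta_{T+1}$ and $\theta_{T+1}$ from their first-order conditions, a CLT for the resulting score difference split into a distributional part (driven by the bin weights $W_j$) and a conditionally centered sampling part, and a second-order Taylor expansion of the risk whose trace yields $\mu$. The only substantive differences are that you re-derive the distributional CLT inline from the bin construction (including the lemma $m\sum_j c_jc_j^\top\to\mathrm{Var}_{\mathbb{P}^0}(g)$ and the conditioning-on-$W$ decoupling), whereas the paper delegates all of this to its Theorem~\ref{thm:clt} and Lemmas~\ref{lemma:dist}--\ref{lemma:samp}; and that you retain the factor $\tfrac12$ from the Taylor expansion of the risk, which the paper silently drops in \eqref{eq:risk-taylor} — a constant that does not affect the optimal weights but does mean your $\mathbb{E}[\mathcal{E}]$ differs from the stated one by a factor of $\tfrac12$, a discrepancy that appears to originate in the paper's own proof rather than in yours.
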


The proof can be found in the Appendix, Section~\ref{sec:proof-of-ood-error}. The consistency assumptions in Theorem~\ref{thm:ood-error}; $\hat{\theta}_{T+1}^{\beta} - \theta^0 = o_P(1)$, $ \theta_{T+1} - \theta^0 = o_P(1)$ are proved in the Appendix, Section~\ref{sec:consistency}.

In contrast to \citep{jeong2024oodgeneralization}, we examine regimes where sampling uncertainty and distributional uncertainty are of the same order. This allows us to trade off distributional variation and sampling variation. The balance between these variations is determined by $r$. 
As $r \rightarrow \infty$, sampling uncertainty dominates and we assign weights that minimize the sampling variation, while as $r \rightarrow 0$, distributional uncertainty becomes more significant and we assign weights that minimize the residual distribution variation.

Let us define the distributional variation as 
\begin{equation*}
    \delta^2_t(\beta) = E[(W^{t} - \sum_{k=1}^K \beta_k W^{t-k})^2].
\end{equation*}
From Theorem~\ref{thm:ood-error}, the optimal weight of weighted ERM for the target time point $T+1$ is defined as 
\begin{align}\label{eq:opt-weights}
    \beta^{*}_{T+1} = \argmin_{\substack{\beta:\beta^{\intercal} 1 = 1 \\\beta \geq 0}} \left(\delta^2_{T+1}(\beta) + \sum_{k=1}^K \beta_k^2 r_{T+1-k}\right).
\end{align}

\subsection{Time Series Modeling of Distribution Shifts} \label{sec:time-series-modeling}

To model how data distributions evolve over time, we model the temporal structure of the random weights $(W^1$, $W^2$, $\dots$, $W^t$, $\dots)$ rather than directly modeling the data. This approach provides a unified framework for handling distribution shifts across different data modalities, including both discrete and continuous sample spaces.

There may be various ways to model the temporal dependencies in random weights. Throughout the paper, we focus on weakly stationary random weights. Specifically, we assume the covariance between any two random weights depends only on the time difference between them: $\text{Cov}(W^t, W^{t+h}) = \rho(h)$ where $\rho$ is their autocovariance function. Note that with weakly stationary random weights,
\begin{equation*}
    \delta^2(\beta) := \delta^2_t(\beta) \quad \forall t. 
\end{equation*}

For example, the random weights might obey the Autoregressive Moving Average (ARMA($p, q$)) model:
\begin{equation*}
    W_j^t =  \sum_{i=1}^p \varphi_i W_j^{t-i} + \sum_{i=1}^q \alpha_i \epsilon_j^{t-i} + \epsilon_j^t.
\end{equation*}
Note that $\alpha_1, \dots, \alpha_q, \varphi_1, \dots, \varphi_p \geq 0$ and $\epsilon_j^t$ are independent positive error terms. The model reduces to a moving average (MA($q$)) process when $p = 0$, and to an autoregressive (AR($p$)) process when $q = 0$.
Note that MA($q$) model is always weakly stationary and both AR($p$) and 
ARMA($p, q$) models are weakly stationary when the roots of the polynomial $\Phi(z) = 1 - \sum_{i=1}^p \varphi_i z^i$ lie outside the unit circle. We illustrate the data-generating process under the random temporal distribution shift model, where random weights follow an AR(3) model in Figure~\ref{fig:diagram}.

In the following, we give three special examples where commonly used weights for sequential datasets arise as optimal solutions under the random temporal distribution shift model. 

\paragraph{Case 1: Pooling data.} Suppose sampling variability dominates distributional uncertainty $(r \xrightarrow[]{} \infty)$. Then, the optimal weights in \eqref{eq:opt-weights} are
\begin{equation*}
    \beta^* = \argmin_{\substack{\beta:\beta^{\intercal}1 = 1  \\ \beta\geq 0}} \sum_{k=1}^{K} \beta_k^2 \frac{N/K}{n_{T+1-k}},
\end{equation*}
which lead to $\beta_k^*= n_{T+1-k} / N$. This corresponds to pooling all data.

\paragraph{Case 2: Using the most recent data.} Suppose distributional uncertainty dominates sampling uncertainty $(r \xrightarrow[]{} 0)$ and weights follow an AR(1) process. Then the optimal weighting in \eqref{eq:opt-weights} assigns most of the weight to the most recent data, with a pooling term that vanishes to zero as $K \xrightarrow[]{} \infty$, i.e., 
\begin{equation*}
\beta_k^* ={\varphi \text{I}(k=1)} + {\frac{1}{K}(1-\varphi)}.
\end{equation*}

\paragraph{Case 3: Exponential down-weighting.} Suppose distributional uncertainty dominates sampling uncertainty $(r \xrightarrow[]{} 0)$ and weights follow an ARMA(1, 1) process, i.e., 
\begin{equation*}
    W_j^t = c + \varphi W_j^{t-1} + \epsilon_j^t -\theta \epsilon_j^{t-1},
\end{equation*}
for $\varphi, \theta \in (0, 1)$ and $c \ge 0$. To ensure that the weights $W_j^t$ are non-negative, we assume that $\epsilon_j^t \le c/\theta$. Then, 
\begin{equation*}
    \epsilon^t-E[\epsilon] = W^t - \frac{1-\varphi}{1-\theta} \cdot E[W] - \sum_{k=1}^{\infty} (\varphi-\theta) \theta^{k-1} W^{t-k}.
\end{equation*}
Let $\varphi > \theta$. This gives us optimal weights: for $k \geq 1$, 
\begin{equation*}
    \beta_k^*  = \frac{1}{K}\frac{1-\varphi}{1-\theta}  + (\varphi-\theta)\theta^{k-1}.
\end{equation*}
with $K \xrightarrow[]{} \infty$. In short, each weight consists of a pooling term (which vanishes as $K \rightarrow \infty$) and an exponentially decaying term.

\section{Estimation of the optimal weights}

In this section, we present a method for estimating the optimal weights $\beta^*$. For simplicity, let sample sizes of training datasets be $n := n_t$ for all $t$. 

We consider both parametric and nonparametric approaches. Under the assumption that random weights follow a weakly stationary ARMA process as in Section~\ref{sec:time-series-modeling}, we can parametrize $\beta^*$ as a weighted sum of three components: a uniform weight, an emphasis on the most recent data, and an exponentially decaying weights. Specifically, for $k = 1, \dots, K$,
\begin{equation}\label{eq:param-form}
    \beta^*_k = \alpha_1 \cdot \frac{1}{K} + \alpha_2 \cdot I(k=1) + \alpha_3 \cdot \frac{\theta^{k-1}}{\sum_{k=1}^K \theta^{k-1}},
\end{equation}
where $\alpha_1 + \alpha_2 + \alpha_3 = 1$. The parameters $\alpha_1, \alpha_2, \alpha_3$, and $\theta$ (rate of decay) can be selected via cross-validation. 

The nonparametric approach builds on the following distributional central limit theorem (CLT), which characterizes the asymptotic behavior of weighted empirical means under random distribution shifts. 

\begin{theorem}\label{thm:clt}
    Assume the distributional perturbation model of Section~\ref{sec:model}, where random weights are weakly stationary. Suppose Assumption~\ref{ass:same-order} holds with $r = \lim m(N)/n$. 
Then, for any Borel measurable square-integrable function $\phi: \mathcal{D} \xrightarrow[]{} \mathbb{R}^d$, we have
\begin{align*}
    \sqrt{m(N)}\left(\mathbb{E}^{t}[\phi(D)]  - \sum_{k=1}^K \beta_k \hat{\mathbb{E}}^{t-k}[\phi(D)]\right) \xrightarrow[]{d}
     N\left(0, \Tilde{\delta}^2(\beta) \cdot \text{Var}_{\mathbb{P}^0}(\phi(D))\right),
\end{align*}
where for any $t > K$,  $ \Tilde{\delta}^2(\beta) = \delta^2(\beta) + r \beta^{\intercal}\beta.$
Note that $K$ is fixed, and the asymptotic behavior is considered as the size of each dataset tends to infinity. 
\end{theorem}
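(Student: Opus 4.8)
The plan is to center the statistic, decompose the scaled error into a purely \emph{distributional} fluctuation and a purely \emph{sampling} fluctuation, apply a separate central limit theorem to each, and then glue them together by a conditioning argument. Passing to the uniform coordinate via $D \stackrel{d}{=} h(U)$, set $g = \phi\circ h$ and $\bar g_j = m\int_{I_j} g(u)\,du$, the average of $g$ over bin $I_j$. Then $\mathbb{E}^{t}[\phi(D)] = \frac{1}{m}\sum_{j=1}^m W_j^{t}\,\bar g_j$ and $\hat{\mathbb{E}}^{t-k}[\phi(D)] = \frac1n\sum_{i=1}^n g(U_{(t-k)i})$. Because $\sum_{k=1}^K\beta_k = 1$ and each perturbed density integrates to one, replacing $\phi$ by $\phi - \mathbb{E}^0[\phi(D)]$ leaves the left-hand side unchanged; this is precisely the step that turns the second-moment matrix $\mathbb{E}^0[\phi\phi^\intercal]$ arising below into $\mathrm{Var}_{\mathbb{P}^0}(\phi(D))$, so I assume $\mathbb{E}^0[\phi(D)] = 0$ henceforth.

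Next I would split
\[
\sqrt{m}\Big(\mathbb{E}^{t}[\phi] - \sum_{k=1}^K\beta_k\hat{\mathbb{E}}^{t-k}[\phi]\Big) = A_m + B_m, \quad A_m := \sqrt{m}\Big(\mathbb{E}^{t}[\phi] - \sum_{k}\beta_k\mathbb{E}^{t-k}[\phi]\Big), \quad B_m := \sqrt{m}\sum_{k}\beta_k\big(\mathbb{E}^{t-k}[\phi] - \hat{\mathbb{E}}^{t-k}[\phi]\big),
\]
where $A_m$ depends only on the weights $W$ and $B_m$ is sampling noise with $\mathbb{E}[B_m \mid W] = 0$. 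For $A_m$, writing $V_j = W_j^{t} - \sum_k\beta_k W_j^{t-k}$ gives $A_m = m^{-1/2}\sum_{j=1}^m V_j\,\bar g_j$. The $V_j$ are i.i.d.\ across bins (weights are independent across bins), have mean zero (since $\sum_k\beta_k=1$), and variance $\mathbb{E}[V_j^2] = \delta^2(\beta)$ by definition of $\delta^2$. I would then invoke a triangular-array CLT: the covariance converges because $m^{-1}\sum_j \bar g_j\bar g_j^\intercal \to \int_0^1 gg^\intercal = \mathrm{Var}_{\mathbb{P}^0}(\phi(D))$ (using centering, via refinement of the bin $\sigma$-algebra), yielding $A_m \xrightarrow{d} N(0,\delta^2(\beta)\,\mathrm{Var}_{\mathbb{P}^0}(\phi))$.

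For $B_m$, I would condition on $W$. Given $W$, the $K$ datasets are independent, and for each $k$ the sample-mean fluctuation satisfies $\sqrt n(\mathbb{E}^{t-k}[\phi] - \hat{\mathbb{E}}^{t-k}[\phi]) \xrightarrow{d} N(0, \mathrm{Var}_{\mathbb{P}^{t-k}}(\phi))$ by the classical CLT, while $\sqrt{m/n}\to\sqrt r$ by Assumption~\ref{ass:same-order}. Since the bins become finer, $\mathrm{Var}_{\mathbb{P}^{t-k}}(\phi)\to\mathrm{Var}_{\mathbb{P}^0}(\phi)$ in probability by a law of large numbers over the $m$ bins; summing the independent contributions over $k$ gives the conditional limit $N(0, r\,\beta^\intercal\beta\,\mathrm{Var}_{\mathbb{P}^0}(\phi))$, whose variance does not depend on $W$ in the limit.

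Finally I would glue the two pieces. Because $\mathbb{E}[B_m\mid W]=0$ and $A_m$ is $W$-measurable, $\mathrm{Cov}(A_m,B_m)=0$ exactly; to upgrade this to joint asymptotic normality I would factor the characteristic function as $\mathbb{E}[e^{is^\intercal(A_m+B_m)}] = \mathbb{E}[e^{is^\intercal A_m}\,\mathbb{E}[e^{is^\intercal B_m}\mid W]]$, note the inner conditional characteristic function converges in probability to $\exp(-\tfrac12 s^\intercal r\beta^\intercal\beta\,\mathrm{Var}_{\mathbb{P}^0}(\phi)\,s)$ (conditional CLT with converging conditional variance), and combine with the unconditional limit of $A_m$ by bounded convergence. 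Adding the two independent Gaussian limits yields variance $(\delta^2(\beta)+r\beta^\intercal\beta)\mathrm{Var}_{\mathbb{P}^0}(\phi) = \Tilde{\delta}^2(\beta)\,\mathrm{Var}_{\mathbb{P}^0}(\phi)$, as claimed. The main obstacle is exactly this last gluing: two central limit theorems acting on different index sets (bins $j=1,\dots,m$ versus samples $i=1,\dots,n$) and on randomness at different levels must be shown jointly Gaussian, which requires the conditional-characteristic-function argument together with the verification that the conditional variance of $B_m$ concentrates; a secondary technical point is checking the Lindeberg condition for $A_m$ when the weights have only finite variance and $g$ is merely square-integrable (so the bin averages $\bar g_j$ may be unbounded), which rests on $m^{-1}\max_j\bar g_j^2\to 0$ for $L^2$ functions.
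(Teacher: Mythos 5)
Your proposal is correct and its architecture mirrors the paper's: the same split into a distributional term $A_m$ (the paper's Lemma~\ref{lemma:dist}, obtained there by a linear transformation of the distributional CLT of \cite{jeong2024oodgeneralization} rather than re-derived from the bin-level variables $V_j\bar g_j$ as you do) and a sampling term $B_m$, and the same gluing device: condition on the weights $W$, establish conditional Gaussianity of the sampling part, and integrate out. The paper executes the gluing with a Berry--Esseen bound plus a Portmanteau/CDF computation ($E_\xi[\Phi(A_n/\sqrt{mB_n})]\to\Phi(x)$), where you use conditional characteristic functions; these are equivalent. The one place you genuinely diverge is the passage from bounded to square-integrable $\phi$. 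The paper proves the result for bounded $\phi$ only (its Berry--Esseen step needs third moments) and then treats general $\phi\in L^2$ by approximating with a bounded $\phi^B$ satisfying $\|\phi-\phi^B\|_{L^2(\mathbb{P}^0)}<\epsilon$ and controlling the remainder by Chebyshev in both the sampling and the bin directions (the latter via the Jensen bound $m(\int_{I_j}\psi)^2\le\int_{I_j}\psi^2$). You instead verify Lindeberg directly for $L^2$ functions: for $A_m$ your key observation $m^{-1}\max_j\bar g_j^2\le\max_j\int_{I_j}g^2\to 0$ is correct and that part goes through. But note that the conditional CLT you invoke for $B_m$ is itself a triangular array whose sampling law $\mathbb{P}^{t-k}$ changes with $m$, so when $\phi$ is unbounded it needs its own conditional Lindeberg check; this can be supplied (e.g.\ $E[\mathbb{E}^{t-k}[\phi^2\,1\{|\phi|>\epsilon\sqrt n\}]]=\mathbb{E}^0[\phi^2\,1\{|\phi|>\epsilon\sqrt n\}]\to 0$ together with Markov's inequality), or one can simply adopt the paper's truncation, which settles both terms at once. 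Either way the route is viable; the paper's buys uniformity from boundedness at the cost of an extra approximation layer, while yours is more direct but must verify two Lindeberg conditions instead of one.
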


Theorem~\ref{thm:clt} implies that for any square-integrable function $\phi(D)$ and for any $t > K$, the covariance term of the limiting distribution is inflated by the same factor $\Tilde{\delta}^2(\beta)$. Therefore, we can estimate $\beta^*$ by minimizing this inflation factor using $L$ different test functions $\phi_1(D), \dots, \phi_{L}(D)$ as follows:
\begin{align*}
    \hat{\beta} = \argmin_{\substack{\beta: \beta^{\intercal} 1 = 1 \\ \beta \geq 0}} \frac{1}{T-K}\frac{1}{L}\sum_{t=K+1}^{T}\sum_{\ell=1}^{L}\Bigg(\hat{\mathbb{E}}^t[\phi_{\ell}(D)]  - \sum_{k=1}^K\beta_k \hat{\mathbb{E}}^{t-k}[\phi_{\ell}(D)]\Bigg)^2.
\end{align*}
Then, $\Tilde{\delta}^2(\hat{\beta})$ is consistent for $\Tilde{\delta}^2(\beta^*)$.
\begin{proposition}[Consistency of $\hat{\beta}$]\label{prop:consistency}

    Assume that the $L$ test functions $(\phi_1, \dots, \phi_L)$ are uncorrelated and have unit variances under $\mathbb{P}^0$. Let $n := n_t$ for all $t$. If either $L \xrightarrow{} \infty$, or $T \xrightarrow{} \infty$ with $\text{Cov}(W_t, W_{t-h}) \xrightarrow{} 0$ as $h \xrightarrow{} \infty$, then we have 
    \begin{equation*}
        \Tilde{\delta}^2(\hat{\beta}) \xrightarrow[]{p} \Tilde{\delta}^2(\beta^*).
    \end{equation*}
\end{proposition}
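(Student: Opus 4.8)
The plan is to treat $\hat\beta$ as the solution of a constrained least-squares problem over the compact simplex $\Delta = \{\beta : \beta^\intercal \mathbf{1} = 1,\ \beta \ge 0\}$ and to establish an \emph{argmin-value} consistency statement rather than consistency of $\hat\beta$ itself (which would require a uniqueness/identifiability assumption we do not have). Since the $\argmin$ is unaffected by the positive rescaling $m(N)$, I work with the rescaled criterion $\hat Q_N(\beta) := \frac{m(N)}{(T-K)L}\sum_{t=K+1}^{T}\sum_{\ell=1}^{L} Z_{t,\ell}(\beta)^2$, where $Z_{t,\ell}(\beta) := \hat{\mathbb{E}}^t[\phi_\ell] - \sum_{k=1}^{K}\beta_k \hat{\mathbb{E}}^{t-k}[\phi_\ell]$. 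The target is the population criterion $Q(\beta) := \Tilde{\delta}^2(\beta) + r$; because $n_t \equiv n$ forces $r_{T+1-k}\equiv r$ in \eqref{eq:opt-weights}, the additive constant $r$ does not move the minimizer, so $\argmin_{\Delta} Q = \argmin_{\Delta}\Tilde{\delta}^2 = \beta^*$. The whole argument then rests on two pieces: (i) the uniform law $\sup_{\beta\in\Delta}\lvert \hat Q_N(\beta) - Q(\beta)\rvert \to_p 0$; and (ii) the elementary inequality $0 \le Q(\hat\beta) - \min_{\Delta} Q \le 2\sup_{\Delta}\lvert \hat Q_N - Q\rvert$, valid because $\hat\beta$ minimizes $\hat Q_N$. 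Together these give $Q(\hat\beta)\to_p Q(\beta^*)$, i.e.\ $\Tilde{\delta}^2(\hat\beta)\to_p \Tilde{\delta}^2(\beta^*)$, which is the claim.

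To identify the limit $Q$, I would first exploit the constraint $\beta^\intercal\mathbf{1} = 1$: subtracting the parent mean $\mathbb{E}^0[\phi_\ell]$ from every empirical mean leaves $Z_{t,\ell}(\beta)$ unchanged, so each residual is a contrast of centered means and is $O_P(m(N)^{-1/2})$, making $\hat Q_N$ a genuine $O_P(1)$ quadratic form in $\beta$ whose coefficients are sample second moments of the $\sqrt{m(N)}$-scaled centered means. I then split each centered mean into a distributional part $\mathbb{E}^{t-k}[\phi_\ell]-\mathbb{E}^0[\phi_\ell]$ (a functional of the weights $W$) and a conditionally-mean-zero sampling part $\hat{\mathbb{E}}^{t-k}[\phi_\ell]-\mathbb{E}^{t-k}[\phi_\ell]$; conditioning on $W$ kills the cross term in expectation. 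Theorem~\ref{thm:clt} supplies the distributional contribution $\delta^2(\beta)\,\text{Var}_{\mathbb{P}^0}(\phi_\ell)$, while a direct computation over the $K+1$ independent datasets gives the sampling contribution $r\,(1+\beta^\intercal\beta)\,\text{Var}_{\mathbb{P}^0}(\phi_\ell)$---the extra ``$+r$'' arising from the sampling noise of the leading term $\hat{\mathbb{E}}^t$, which is present here but absent from Theorem~\ref{thm:clt} (stated with $\mathbb{E}^t$). Finally, since the residual is exactly invariant to adding constants to $\phi_\ell$, I may take each $\phi_\ell$ mean-zero, so that ``uncorrelated with unit variance'' becomes orthonormality in $L^2(\mathbb{P}^0)$; averaging over $\ell$ then collapses the above to $\delta^2(\beta)+r(1+\beta^\intercal\beta) = \Tilde{\delta}^2(\beta)+r = Q(\beta)$.

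The crux---and the step I expect to be hardest---is showing that the sample second moments actually concentrate at their expectations, i.e.\ that $\hat Q_N(\beta)\to_p Q(\beta)$ pointwise. Each summand $m(N)Z_{t,\ell}(\beta)^2$ converges only in distribution, to a scaled $\chi^2_1$, so concentration must come from averaging over many weakly dependent summands, and this is exactly what the two hypotheses provide. In the regime $L\to\infty$, orthonormality of the $\phi_\ell$ makes the $\sqrt{m(N)}Z_{t,\ell}$ asymptotically jointly Gaussian and asymptotically uncorrelated across $\ell$, hence asymptotically independent; their squares then have cross-covariances that vanish, and $\frac1L\sum_\ell$ concentrates with variance $O(1/L)$. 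In the regime $T\to\infty$, I would instead invoke a law of large numbers for the weakly dependent array indexed by $t$: consecutive windows overlap in at most $K$ datasets (finite-range dependence), while the long-range dependence is governed by the weight autocovariance, which the hypothesis $\text{Cov}(W_t,W_{t-h})\to 0$ forces to decay, so the autocovariances of the summands are summable. The genuinely delicate parts are the uniform control of the fourth-order quantities $\text{Cov}(Z_{t,\ell}^2, Z_{t',\ell'}^2)$ needed for these variance bounds, and the justification of interchanging the $m(N)\to\infty$ limit with the $L\to\infty$ (resp.\ $T\to\infty$) limit.

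With pointwise convergence in hand, the upgrade to uniform convergence on $\Delta$ is routine: $\hat Q_N$ is a quadratic form in $\beta$ with finitely many coefficients, so convergence of those coefficients together with compactness of $\Delta$ and continuity of the quadratic map yields $\sup_{\Delta}\lvert \hat Q_N - Q\rvert\to_p 0$, and the argmin-value inequality then closes the argument.
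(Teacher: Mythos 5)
Your proposal is correct and follows essentially the same route as the paper's proof: both reduce $\hat\beta$ to the minimizer of a quadratic form in $\beta$ whose coefficient matrix converges in probability to $\Sigma^W + rI + r11^{\intercal}$ (your extra ``$+r$'' from the sampling noise of $\hat{\mathbb{E}}^t$ is exactly the paper's $r\cdot 11^{\intercal}$ term, harmless on the simplex since $\beta^{\intercal}11^{\intercal}\beta = 1$), establish that convergence via asymptotic independence across $\ell$ when $L\to\infty$ and a weak-dependence law of large numbers (Bernstein's WLLN) across $t$ when $T\to\infty$, and close with the same $2\epsilon$ argmin-value sandwich over the compact constraint set. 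The only difference is presentational—you phrase everything in terms of the scalar criterion while the paper tracks the matrix $\hat{\Tilde{\Sigma}}$—and you are somewhat more explicit about the fourth-moment control and limit interchanges that the paper leaves implicit.
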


\begin{remark} We assume that the $(\phi_1, \dots, \phi_L)$ are uncorrelated and have unit variances under $\mathbb{P}^0$. If this is not the case, we can apply a linear transformation $LT$ to the test functions in a pre-processing step to obtain uncorrelated test functions with unit variances. We define the transformation matrix $LT = (\hat{\Sigma}^{\Phi})^{-\frac{1}{2}}$, where $\hat{\Sigma}^{\Phi}$ is an estimated covariance matrix $\Sigma^{\Phi}$ of $(\phi_1, \dots, \phi_L)$ on the pooled data. 
\end{remark}

When applying our method with user-defined test functions, selecting appropriate test functions remains an important question as the performance of our method depends on well-behaved test functions. In the supplementary material, Section~\ref{sec:test-functions}, we provide two examples of sets of test functions: one using only covariate information, and another incorporating outcome information. In practice, test function selection can be guided by cross-validation. In the supplementary material, Section~\ref{sec:nyc-taxi}, we compare the empirical performance of different sets of test functions.

\section{Experiments}

\subsection{Yearbook Dataset in Wild-Time \texorpdfstring{\citep{yao2023wildtime}}{[Yao et al., 2023]}}

The Wild-Time benchmark \citep{yao2023wildtime} investigates real-world temporal distribution shifts across multiple tasks including image classification \citep{ginosar2015century, christie2018fmoW}, ICU patient mortality/readmission prediction \citep{johnson2021mimic}, and text data classification \citep{misra2021sculpting, clement2019arxiv}. We apply our method as a fine-tuning step on top of a range of approaches evaluated in Wild-Time to demonstrate that our method can effectively handle residual temporal distribution shifts. Specifically, we extract last hidden-layer representations from the neural network model provided by \cite{yao2023wildtime} and train a model on top of these representations with our approach. Due to computational constraints and lack of availability of model checkpoints for other datasets \footnote{The computational constraint applies to the FMoW dataset. For the ArXiv and HuffPost datasets, model checkpoints are not publicly available. For MIMIC-IV, the provided checkpoints have a dimensional mismatch with the available data embeddings.}, we focus on the Yearbook dataset \citep{ginosar2015century}. The task is to predict gender based on yearbook images. As in \cite{yao2023wildtime}, the training set consists of data from before 1970, and the test set consists of data after 1970. 

We adopt a parametric approach to estimate the optimal weights, assuming the form in \eqref{eq:param-form}. The hyperparameters $(\alpha_1, \alpha_2, \alpha_3, \theta)$ are selected via cross-validation. Table~\ref{tab:rider_finetuning_all} reports the resulting classification accuracy, including comparisons against standard fine-tuning and a variety of baseline models: supervised learning (ERM), continual learning (EWC \citep{kirkpatrick2017overcoming}, SI \citep{zenke2017synaptic}, A-GEM \citep{chaudhry2019efficient}, Fine-tuning), invariant learning (CORAL \citep{sun2016deep}, IRM \citep{arjovsky2019invariant}, GroupDRO \citep{sagawa2020distributionally}), self-supervised learning (SimCLR \citep{chen2020simple}, SwaV \citep{caron2020unsupervised}), and ensemble learning (SWA \citep{izmailov2018averaging}). While performance is similar for GroupDRO and SwaV, our weighting strategy consistently improves classification accuracy across a wide range of methods. This demonstrates the importance of capturing residual random distribution shifts even after using methods that capture first-order effects.

\begin{table}[t]
\centering
\caption{Comparison of performance with and without fine-tuning based on our approach across models from Wild-Time \citep{yao2023wildtime}. The table presents the average accuracy (in $\%$) and standard deviation (in parentheses) across three model checkpoints for each method provided by Wild-Time, each trained on different seeds.}
\vspace{0.2em}
\begin{tabular}{lcc}
\hline
\textbf{Method} 
& \multicolumn{2}{c}{\textbf{Yearbook}} \\ 
\hline
& Without RIDER Finetuning & With RIDER Finetuning \\ 
\hline
Finetuning & 81.91 (1.21) & 82.24 (1.41)\\ 
EWC        & 77.54 (3.92) & 83.13 (1.47)\\ 
SI         & 78.70 (3.10) & 83.21 (0.50)\\ 
AGEM       & 81.03 (1.15) & 84.89 (1.16)\\
ERM        & 80.93 (3.51) & 84.03 (0.83)\\ 
CORAL      & 75.78 (1.22) & 81.29 (1.20) \\ 
GroupDRO        & 76.38 (3.71) & 76.19 (1.95) \\
IRM   & 77.98 (3.37) & 80.15 (0.67) \\ 
SimCLR     & 75.33 (1.11) & 77.07 (0.90)\\ 
SWAV       & 77.44 (5.28) & 77.61 (1.22)\\ 
SWA        & 79.81 (3.29) & 83.84 (0.78)\\ 
\hline
\end{tabular}
\label{tab:rider_finetuning_all}
\end{table}

\subsection{Predicting Realized Volatility of Stocks}

Forecasting asset return volatility is a key problem in many financial applications, such as risk management and asset allocation. 
Our goal is to predict next-day Realized Volatility (RV) estimates using data from the Oxford-Man Institute of Quantitative Finance \citep{heber2013oxford}. They provide several daily volatility estimates from 5-minute interval high-frequency data across 30 global stock indices. We focus on data from January 2005 to December 2012, a period that includes several episodes of increased volatility and market stress, such as the Global Financial Crisis from 2007 to 2008.

We simulate a scenario where the realized volatility prediction model is updated weekly. For each update, we use data from the past $K$ weeks. Treating each week as a dataset (denoted by $\mathcal{D}^w$ for a week $w$), our method estimates weights $\beta \in \mathbb{R}^K$. Since each dataset contains only 4–5 data points and just 4 covariates (as defined below), applying some of the methods from the Wild-Time benchmark in Table~\ref{tab:rider_finetuning_all} has limitations. Nonetheless, our method remains applicable on its own in this setting. Below, we outline the step-by-step procedure. 

\begin{enumerate}
    \item We collect daily data $(X_d, Y_d)$, where $d$ denotes the day, spanning from January 2005 and December 2012. The outcome $Y_d$ is the RV estimate for the following day $d+1$. The features $X_d \in \mathbb{R}^4$ include realized variance, median realized variance, negative realized semivariance, and bipower variation of the day $d$. Each week, we fit the prediction model. The target weeks range from January 2006 to December 2012. 

    As part of preprocessing, for each window of data points being considered, covariates and outcomes are clipped at the 5th and 95th percentiles. Clipping is a commonly used technique in financial time series prediction \citep{zhong2019predicting}.

    \item We define test functions simply to be covariates: $\phi_{\ell}(X_d, Y_d) = X_{d, \ell}$ for $\ell = 1, 2, 3, 4$.  

    \item For target week $w^{\text{t}}$, we estimate weights $\hat{\beta} \in \mathbb{R}^{K}$ for $K$ as
    \begin{align*}
        &\text{minimize}\sum_{w=w^{\text{t}}-K/2}^{w^{\text{t}}-1} \sum_{\ell=1}^{4} \Bigg(\hat{\mathbb{E}}^{w}[\phi_{\ell}] - \sum_{k=1}^{K}\beta_k \hat{\mathbb{E}}^{w-k}[\phi_{\ell}]\Bigg)^2, \\
        &\text{subject to } \quad \beta^{\intercal}1 = 1,  \beta \geq 0, \beta_1 \leq B, \\
        &\quad\quad\quad\quad\quad \text{  } \beta_1 \geq \beta_2 \geq \dots \geq \beta_{K}.
    \end{align*}
    We impose additional constraints including a monotonicity constraint as a form of regularization to manage a relatively large value of $K$. These weights capture how the closeness between datasets decreases as they become further apart. We also ensure that the majority of the weights are not concentrated on just the first few elements by bounding them with $B$. The upper bound $B$ is chosen based on reference exponential weights in \eqref{eq:exp-weight}, using various half-life values. The half-life value is chosen via cross-validation.

    \item We then fit the linear model. The model minimizes the following objective: 
    \begin{equation*}
       \hat{\theta}_{w^t} = \arg\min_{\theta}\textstyle\sum_{k=1}^{K} \hat{\beta}_k\sum_{d\in w^{t}-k}(Y_d- X_d^{\intercal}\theta)^2.
    \end{equation*}
    
\end{enumerate}

\begin{figure}[t]
    \centering
    \includegraphics[width=0.31\linewidth]{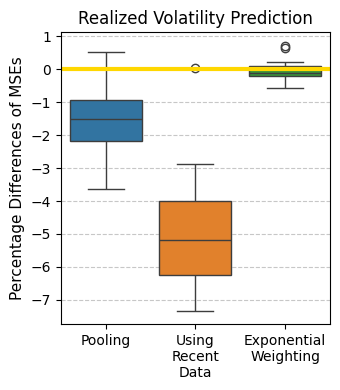} 
    \includegraphics[width=0.42\linewidth]{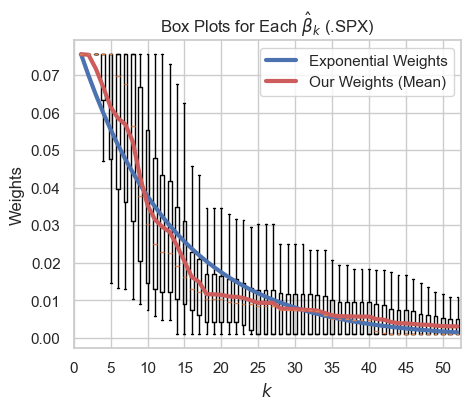}
    \caption{[Left] Boxplots showing the percentage differences in test MSEs of our method compared to three baseline methods. A negative difference means our method performs better, with $y=0$ highlighted by a yellow line. The t-tests would give significant p-values (<1e-3) for the two leftmost boxplots, while the rightmost boxplot has a p-value of 0.57. [Right] Distributions of $\hat{\beta}$ for $K=52$ across a range of target weeks from January 2006 to December 2012 for the S$\&$P 500 index.}
    \label{fig:boxplots}
\end{figure}

 We compare our method to three common practices over 15 stock market indices.\footnote{The indices cover BVSP (Brazil), CAC 40 (France), DAX (Germany), Euro Stoxx 50 (Europe), FTSE 100 (UK), SSMI (Swiss), Nasdaq (US), Russel 2000 (US), S$\&$P 500 (US), IPC (Mexico), HSI (Hong Kong), Nifty 50 (India), Nikkei 225 (Japan), SCI (China), and KOSPI (Korea).} We select the hyperparameters $[K, B]$ via cross validation. The results are summarized in Figure~\ref{fig:boxplots}. The right-hand side shows the percentage differences in test MSE (out-of-sample error) between our method and each comparison method. A negative difference indicates that our method performs better. The three methods are as follows: 1) \textbf{Pooling}: Train on the entire $K$ weeks data with equal weights, 2) \textbf{Using Recent Data}:  Train on the most recent 10 weeks data with equal weights, 3) \textbf{Exponential Weighting}: Train with exponential weights with half-life $H$ (weeks), which are commonly used in finance applications \citep{luxenberg2022exponentially}. The exponential weights are defined as
\begin{equation}\label{eq:exp-weight}
    \beta^{\text{exp}}_k = (1/2)^{k/H} / \textstyle\sum_{k=1}^{K} (1/2)^{k/H}.
\end{equation}
Note that for pooling and exponential weighting, the hyperparameters $K$ and $[K, H]$, respectively, are selected via cross-validation.

The results show that our proposed model outperforms both Method 1 and Method 2, and performs similarly to Method 3. An example of the distribution of fitted weights for $K=52$ (1 year) and $H = 9$ (63 days as in \cite{luxenberg2022exponentially}) for the S$\&$P 500 is provided in the right-hand side of Figure ~\ref{fig:boxplots}. Interestingly, our weights behave similarly to exponential weights on average, which helps explain their comparable performances. In fact, we recover exponential weighting both theoretically and empirically. This suggests that the random distribution shift model could provide a theoretical explanation for the real-world effectiveness of exponential weighting.

In the supplementary material, Section~\ref{sec:nyc-taxi}, we include an additional experiment using the NYC Taxi dataset where we explore alternative choices of test functions. By designing test functions that capture shifts in $Y|X$, we show that our proposed method can achieve some additional performance gains. 

\section{Related Work}

\textbf{Temporal Shifts.} Temporal distribution shifts have been the focus of temporal domain genaralization \citep{nasery2021training, qin2023generalizing, bai2022temporal, zeng2022latent}. Our approach, however, takes a fundamentally different direction: weighted ERM under random distribution shifts. While \cite{bennett2022time} also uses weighted ERM framework, they assume a parametric form (e.g. exponential) for ERM weights without theoretical justification and do not provide theoretical guarantees for the estimated weights.

A key distinction of our procedure is that it can be used with any empirical risk minimization procedure, with little additional computational cost. In contrast, \cite{bai2022temporal} can be computationally intensive for large-scale datasets and models, as they involve recurrent graph updates. Furthermore, unlike approaches in \citep{nassar, zeng2022latent, xie2024enhancing}, our method does not require generative models. We compute the expected excess risk under random distribution shifts, with emphasis on an equality rather than an upper bound. Prior work has primarily focused on upper bounds \citep{bartlett1992learning, barve1996complexity, long1998complexity, mohri2012new, hanneke2019statistical, mazzetto2023adaptive, zeng2024generalizing, pham2025nonstationary}.

Dynamic topic models use a state-space model to model the parameters of a sequence of parametric distributions and assume data is drawn i.i.d. from each distribution \cite{blei2006dtm, heckerman2006dtm, srebro2005time}. However, we model temporal dependencies using random weights, avoiding specific parametric assumptions about data distributions. This allows us to handle temporal distribution shifts across different data modalities. 

\textbf{Continual Learning.} Continual learning aims to effectively adapt to non-stationary distributions through sequential data streams \citep{adel2019claw, kirkpatrick2017overcoming, schwarz2018progress, zenke2017synaptic, chaudhry2019efficient, lopez2017gradient, rebuffi2017icarl}. The goal is to accumulate and reuse knowledge while balancing the trade-off between learning new information and retaining old information.

\textbf{Distributionally Robust Optimization.} 
Distributionally robust optimization (DRO) approaches aim to minimize worst-case risk over a set of potential test distributions \citep{ben2013robust, delage2010distributionally, duchi2021learning}. 
Since DRO can often be conservative in practice, the group DRO was proposed instead to optimize the worst-group loss over a collection of target distributions generated from multiple sources \citep{sagawa2020distributionally, hu2018does, wang2023distributionally}. 

\textbf{Invariant Risk Minimization and Causal Representation Learning.}
Invariant risk minimization \cite{arjovsky2019invariant} and causal representation learning \cite{scholkopf2021toward} 
aim to learn predictors that remain invariant across different environments. 
The key idea is that by focusing on stable, causal features rather than spurious correlations, models will generalize better to unseen data. 
However, empirical studies often reveal that these methods do not consistently outperform standard empirical risk minimization \cite{gulrajani2020search, koh2021wilds}.

\textbf{Importance Weighting.} 
Importance weighting techniques re-weigh the empirical loss to match the target distribution \citep{pan2010transfer, shu2019}. 
This approach has been extensively studied in both transfer learning  \citep{pan2010transfer} and robust deep learning \citep{shu2019}.
A classic application is covariate shift \citep{gretton2009covariate, shimodaira2000improving, sugiyama2008direct}. 
Recent empirical studies have found that the covariate shift assumption often does not hold in real-world datasets \citep{newlang2023hong}.

\section{Conclusion}

We have presented a framework for handling temporal distribution shifts through weighted ERM. 
Our procedure provides a principled way to interpolate between three common approaches: pooling all data, exponential down-weighting, and using only the most recent data. Under a random distribution shift model, we provide theoretical guarantees for its out-of-distribution performance. Our empirical results demonstrate that our method can serve as a complementary tool to existing approaches for addressing residual temporal distribution shifts.

This work has limitations. From a modeling perspective, real-world scenarios may involve more complex forms of distributional change than those studied here. For example, hybrid models that capture both systematic and random components of distribution shift could be more appropriate. From an application perspective, when estimating weights using user-defined test functions, developing principled approaches for selecting test functions remains an important direction for future work.

\bibliography{bibliography}

\newpage
\appendix

\section{Selection of Test Functions}\label{sec:test-functions}

The choice of test functions is crucial as they determine how we measure distributional similarity. Our method assumes two distributions are similar if the averaged values of test functions from the two datasets are closely aligned. Therefore, the choice of test functions should capture the dimensions where a close match is critical.

Examples of test functions $\phi_{\ell}(D)$ include:

\begin{enumerate}
    \item $\phi_{\ell}(D)  = X^{\ell}$ for a covariate $X^{\ell}$ so that $\hat{\mathbb{E}}^t[{\phi_{\ell}(D)}] = \hat{\mathbb{E}}^t[X^{\ell}]$. This choice is justified if there is random distribution shift in both $X$ and $Y$.
    \item $\phi_{\ell}(D_{t\bullet}) = 1_{X \in A_{\ell}} Y / \hat{\mathbb{E}}^{t}[1_{X \in A_{\ell}}]$ for a subset $A_{\ell}$ so that $\hat{\mathbb{E}}^t[{\phi_{\ell}(D)}] =\hat{\mathbb{E}}^t[Y|X \in A_{\ell}]$. This choice is justified if there is random distribution shift in $Y|X$.
\end{enumerate}

Note that $\hat{\mathbb{E}}^t[\phi_{\ell}(D)] = \frac{1}{n_t}\sum_{i=1}^{n_t}\phi_{\ell}(D_{ti})$ is the empirical mean on the $t$-th dataset. 
    In practice, it may not be clear a priori which of the two choices is more appropriate for the given dataset. We recommend selecting between the two options using cross-validation.
    
Below, we provide step-by-step implementation guides. 

\paragraph{Step-by-Step Implementation}

\begin{enumerate}
    \item Collect the datasets $\mathcal{D}_1, \dots, \mathcal{D}_T$ where $\mathcal{D}_t = \{D_{ti}=(X_{ti}, Y_{ti})\}_{i=1}^{n_t}$. The target time point is $T+1$.

    \item Choose the window size $K$ and select $L$ test functions $\phi_1(D), \dots, \phi_L(D)$. Test functions can be standardized as follows:
    \begin{equation*}
        \phi_{\ell}(D) \leftarrow \phi_{\ell}(D)/\sqrt{\widehat{\text{Var}}_{\mathbb{P}^0}(\phi_{\ell})}
    \end{equation*}
     where $\widehat{\text{Var}}_{\mathbb{P}^0}(\phi_{\ell})$ is the empirical variance of $\phi_{\ell}(D_{ti})$ on the pooled data $\{\{D_{ti}\}_{i=1}^{n_t}\}_{t=1}^T$. 

     \item Estimate weights $\hat{\beta} \in \mathbb{R}^K$ by solving $\hat{\beta}=$
    \begin{equation*}
        \argmin_{\substack{\beta: \beta^{\intercal}1 = 1,\\ \beta \geq 0} }\sum_{t=K+1}^{T}\sum_{\ell=1}^{L}\left(\hat{\mathbb{E}}^t[\phi_{\ell}(D)] - \sum_{k=1}^K\beta_k \hat{\mathbb{E}}^{t-k}[\phi_{\ell}(D)]\right)^2.
    \end{equation*}

    \item Estimate model parameters using weighted ERM:
    \begin{equation*}
        \hat{\theta}_{T+1} = \arg\min_\theta \sum_{k=1}^K \hat{\beta}_k \hat{\mathbb{E}}^{T+1-k}[\mathcal{L}(Y, f(X; \theta))].
    \end{equation*}
    Then, we predict as $\hat{Y}_{T+1} = f(X_{T+1}; \hat{\theta}_{T+1})$.
\end{enumerate}

\section{Experiments on NYC Taxi Data}\label{sec:nyc-taxi}

In this experiment, we use the publicly available NYC TLC trip record data for yellow cabs \cite{nyctlc}. Each month includes over a million daily trip records. Our goal is to predict ride durations in Manhattan and Brooklyn during the end-of-year season from December 15th to December 31st. This period is known to attract a large number of tourists in NYC, which can affect traffic and the distribution of pick-up and drop-off locations. Our goal is to model how these holiday season effects influence ride durations. 

Suppose we have data collected in the months of November of 2019, 2020, 2021, 2022 and December of 2019, 2020, 2021, with the target month being December 2022. We plan to fit a model that predicts ride durations for the target month using three datasets collected in December ($K=3$). Only the December data is used for training to model the holiday season effect. However, due to distribution shifts, the data distributions from 2019, 2020, and 2021 are likely to differ from that of 2022. To account for these potential distribution shifts across years, we apply our framework and use the November datasets to measure similarity across years. Below we outline the step-by-step procedure: 

\begin{enumerate}
    \item We collect datasets $\mathcal{D}_{yy/mm}=\{D_{yy/mm, i}\}_{i=1}^n$ for 
    $(yy, mm)$ = (19, 11), (19, 12), (20, 11), (20, 12), (21, 11), (21, 12), (22, 11),
    where $yy$ represents a year and $mm$ represents a month. The target time is (22, 12). Note that $D=(X, Y)$ where the outcome $Y\in \mathbb{R}$ represents ride duration, while the features $X\in \mathbb{R}^5$ include weekday, time of day, pick-up locations, drop-off locations, and trip distance. Each dataset consists of $n=10,000$ data points, which are random sub-samples of the original datasets. 

    \item We categorize the time of day into morning (6am-9am), midday (9am-4pm), evening (4pm-7pm), and night (7pm-6am). For each pair $p_{\ell}$
     of time of day and pick-up location, where all datasets have at least 10 points associated with this pair, let the event $A_{\ell}$ be
     \begin{equation*}
         (X \in A_{\ell}) = ((X_\text{time-of-day}, X_\text{pick-up-location}) = p_{\ell})
     \end{equation*}
    Then, for each $A_{\ell}$, we define a test function so that 
    \begin{align*}
        \hat{\mathbb{E}}^{(yy, mm)}[{\phi_{\ell}(D)}] &=\hat{\mathbb{E}}^{(yy, mm)}[g(X, Y)\text{ }|X \in A_{\ell}],
    \end{align*}
    where $g(X, Y) = X_\text{trip-distance}/Y$ represents the travel speed, measured in miles per minute. 
    These test functions are chosen to capture conditional shifts in travel speed, a function of outcomes.

    \item We use November datasets to estimate $\hat{\beta} \in \mathbb{R}^3$ as
    \begin{align*}
        \hat{\beta} = \argmin_{\substack{\beta: \beta^{\intercal}1 = 1,\\ \beta \geq 0}}& \sum_{\ell=1}^{L}\Bigg(\hat{\mathbb{E}}^{(2022, 11)}[\phi_{\ell}(D)] - \sum_{k=1}^{K=3}\beta_k \hat{\mathbb{E}}^{(2022-k, 11)}[\phi_{\ell}(D)]\Bigg)^2.
    \end{align*}
    These weights measure how similar the years 2019, 2020, and 2021 are to 2022. 

    \item We use December datasets to fit the prediction model using XGBoost in Python with the default settings. The categorical variables are recoded into dummy variables. The model minimizes the following objective: 
    \begin{equation*}
       \hat{f} = \arg\min_f\sum_{k=1}^{K=3} \hat{\beta}_k\hat{\mathbb{E}}^{(2022-k, 12)}[(Y- f(X))^2].
    \end{equation*}
    
\end{enumerate}

\begin{figure}
    \centering
    \includegraphics[width=0.31\linewidth]{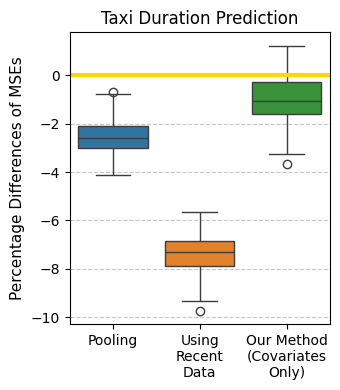}
    \includegraphics[width=0.47\linewidth]{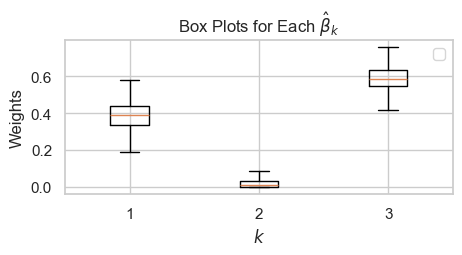}
    \caption{[Left] Boxplots showing the percentage differences in test MSEs of our method compared to three baseline methods. A negative difference means our method performs better, with $y=0$ highlighted by a yellow line. The t-tests would give significant p-values ($<$1e-5) for all the boxplots. [Right] Distributions of $\hat{\beta}$ over 100 simulations.}
    \label{fig:taxi-experiment}
\end{figure}
We compare our proposed method to three different approaches over 100 iterations, each iteration using a different sub-sample of the original datasets. The results are summarized in Figure~\ref{fig:taxi-experiment}. The left-hand side shows the percentage differences in test MSE (out-of-sample error) between our method and each comparison method. A negative difference indicates that our method performs better. The three methods are as follows: (1) \textbf{Pooling}: Train on the entire three-year datasets with equal weights for all samples, (2) \textbf{Using Recent Data}: Train on the most recent data (December 2021), (3)  
\textbf{Our Method (Covariates Only)}: Train using our proposed approach, but without outcome information in the test functions. Instead, it only uses covariates as 
 \begin{align*}
        \hat{\mathbb{E}}^{(yy, mm)}[{\phi_{\ell}(D)}] &=\hat{\mathbb{E}}^{(yy, mm)}[X_\text{trip-distance}\text{ }|X \in A_{\ell}].
    \end{align*}
The results demonstrate that our proposed model outperforms all three baselines, with statistically significant p-values. These findings highlight two key insights. First, instead of relying on standard techniques like equal weighting or using only the most recent data, our method effectively accounts for distribution shifts by assigning more appropriate weights based on past datasets. Second, the choice of test functions can be critical. By designing test functions that also capture shifts in $Y|X$, our proposed method achieves some additional performance gains compared to Method 3, where test functions just use covariates.

The distribution of fitted weights is shown in the right-hand side of Figure~\ref{fig:taxi-experiment}. Interestingly, it assigns zero weight to $\beta_2$ most of the time and slightly more weight to $\beta_3$, which corresponds to the furthest data, compared to $\beta_1$. One possible explanation for this is the impact of COVID-19, which made the year 2020 look quite different from 2022. While November and December of 2021 saw significant recovery from COVID-10, its effects were still lingering, which may explain why 2019 data received slightly higher weights than 2022 data. 

\section{Experimentation Details: Hyperparameter Selection and Model Architecture}\label{sec:exp_details}

\paragraph{Wild-Time Benchmark.} 
We use model checkpoints and the code from the official Wild-Time GitHub repository \url{https://github.com/clinicalml/wild-time}, licensed under the MIT License.

For estimating optimal weights, we conduct cross-validation over the following grid: window size $K \in \{10, 20\}$; mixture weights $\boldsymbol{\alpha} = [\alpha_1, \alpha_2, 1 - \alpha_1 - \alpha_2]$ where $\alpha_1, \alpha_2 \in \{0, 0.2, 0.4, 0.6, 0.8, 1\}$; and a decay parameter $\theta = (1/2)^{1/h}$ where $h \in \{2, 4, 6, 8\}$.

For fine-tuning, we use a 3-layer fully connected neural network. The network has hidden layer sizes of 32, 16, and 8, each followed by a ReLU activation and a dropout layer with dropout rate 0.1. The final layer is a linear output layer that maps to the number of classes. We train the model using a batch size of 128, a learning rate of 0.001, and a maximum of 200 epochs.

\paragraph{Stock Volatility Prediction.}  
For estimating optimal weights, we conduct cross-validation over the following grid: window size $K \in \{26, 39, 52, 65, 78\}$, corresponding to 6, 9, 12, 15, and 18 months. The half-life values $H$ considered are $\{6, 9, 12\}$. 

\section{Proofs}

\subsection{Proofs of Auxiliary Lemmas}

First, we restate Lemma 1 from \cite{jeong2024oodgeneralization} as an lemma~\ref{lemma:prev-paper}.
\begin{lemma}[Distributional CLT]\label{lemma:prev-paper}
Assume the distributional perturbation model mentioned in Section~\ref{sec:model}. For any bounded function $\phi_k: \mathcal{D}_k \xrightarrow[]{} \mathbb{R}$ for $k = 0, 1, \dots, K$, we have
\begin{equation}
    \sqrt{m}\left(
   \begin{pmatrix}
   {\mathbb{E}}^{t}[\phi_0(D)]\\
    {\mathbb{E}}^{t-1}[\phi_1(D)]\\
    \vdots \\
     {\mathbb{E}}^{t-K}[\phi_K(D)]
    \end{pmatrix} - 
    \begin{pmatrix}
    \mathbb{E}^0[\phi_0(D)]\\
    \mathbb{E}^0[\phi_1(D)]\\
    \vdots \\
     \mathbb{E}^0[\phi_K(D)] 
    \end{pmatrix}\right) \xrightarrow[]{d}
     N\left(0, \Sigma_t^W  \odot \text{Var}_{\mathbb{P}^0}(\phi(D))\right)
\end{equation}
where  $\phi(D)^{\intercal} = (\phi_0(D), \phi_1(D), \dots,  \phi_K(D)) \in \mathbb{R}^{K+1}$, and $\Sigma_t^W\in \mathbb{R}^{(K+1) \times (K+1)}$ has 
$$(\Sigma_t^W)_{ij} = Cov_P(W^{t-i+1}, W^{t-j+1}).$$
Here, $\odot$ denotes element-wise multiplication (Hadamard product).
\end{lemma}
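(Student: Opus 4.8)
The plan is to prove the statement by reducing the perturbed expectations to sums of independent contributions across the $m$ bins and then applying a multivariate Lindeberg--Feller central limit theorem. First I would use the uniform representation $D \stackrel{d}{=} h(U)$ from Section~\ref{sec:model} to write each coordinate as an integral over $[0,1]$. Writing $\psi_k = \phi_k \circ h$ and $\bar\psi_{k,j} = m\int_{I_j}\psi_k(u)\,du$ for the bin average, the perturbed-density construction gives
\begin{equation*}
\mathbb{E}^{t-k}[\phi_k(D)] = \frac{m^{-1}\sum_{j=1}^m W_j^{t-k}\,\bar\psi_{k,j}}{m^{-1}\sum_{j=1}^m W_j^{t-k}},
\end{equation*}
where the denominator $B_m^{(k)} := m^{-1}\sum_j W_j^{t-k}$ is the normalizing constant of the perturbed law. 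Since $E[W_j^{t-k}]=1$ and the weights have finite variance, $B_m^{(k)} \to 1$ in probability by the law of large numbers, while $m^{-1}\sum_j \bar\psi_{k,j} = \mathbb{E}^0[\phi_k(D)]$ holds exactly.

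Next, I would center and scale. A first-order (delta-method) expansion around $B_m^{(k)}=1$, justified by Slutsky's theorem, shows that
\begin{equation*}
\sqrt{m}\big(\mathbb{E}^{t-k}[\phi_k(D)]-\mathbb{E}^0[\phi_k(D)]\big) = \frac{1}{\sqrt{m}}\sum_{j=1}^m (W_j^{t-k}-1)\big(\bar\psi_{k,j}-\mathbb{E}^0[\phi_k(D)]\big) + o_P(1),
\end{equation*}
the crucial point being that replacing $\bar\psi_{k,j}$ by its centered version $\bar\psi_{k,j}-\mathbb{E}^0[\phi_k(D)]$ is exactly what the normalization produces, which is why the limit involves the centered $\mathrm{Var}_{\mathbb{P}^0}$ rather than an uncentered second moment. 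Stacking over $k=0,\dots,K$ expresses the scaled vector as $S_m = m^{-1/2}\sum_{j=1}^m V_{m,j} + o_P(1)$, where the vectors $V_{m,j}\in\mathbb{R}^{K+1}$, with $(V_{m,j})_k = (W_j^{t-k}-1)(\bar\psi_{k,j}-\mathbb{E}^0[\phi_k(D)])$, are independent across $j$ and mean zero; dependence across the coordinate index $k$ is retained through the temporal correlation of the weights within a bin.

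I would then identify the limiting covariance. By independence across bins,
\begin{equation*}
\mathrm{Cov}(S_m)_{kk'} = \frac{1}{m}\sum_{j=1}^m (\bar\psi_{k,j}-\mathbb{E}^0[\phi_k])(\bar\psi_{k',j}-\mathbb{E}^0[\phi_{k'}])\,\mathrm{Cov}_P(W^{t-k},W^{t-k'}),
\end{equation*}
and the bin-average factor is a Riemann sum converging to $\int_0^1(\psi_k-\mathbb{E}^0[\phi_k])(\psi_{k'}-\mathbb{E}^0[\phi_{k'}])\,du = \mathrm{Cov}_{\mathbb{P}^0}(\phi_k,\phi_{k'})$. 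Since $\mathrm{Cov}_P(W^{t-k},W^{t-k'}) = (\Sigma_t^W)_{(k+1)(k'+1)}$, this yields exactly the Hadamard product $\Sigma_t^W \odot \mathrm{Var}_{\mathbb{P}^0}(\phi(D))$. Finally, I would invoke the Cram\'er--Wold device and apply the univariate Lindeberg--Feller CLT to each linear combination $\lambda^\intercal S_m$: the Lindeberg condition holds because boundedness of $\phi_k$ bounds $|\bar\psi_{k,j}-\mathbb{E}^0[\phi_k]|$ uniformly in $j$, so that $\|V_{m,j}\|^2 \le \sum_k C_k^2 (W_j^{t-k}-1)^2$ for finite constants $C_k$, and the finite variance of the weights makes the truncated second moments vanish via dominated convergence.

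The main obstacle I anticipate is the careful treatment of the normalization constant: showing that the ratio form of $\mathbb{E}^{t-k}[\phi_k]$ linearizes to the centered sum with the correct $o_P(1)$ remainder, so that the centered (rather than raw) covariance emerges. The remaining work---verifying the Lindeberg condition and the Riemann-sum convergence---is routine given boundedness of the test functions and finite variance of the weights. Since this statement restates Lemma~1 of \cite{jeong2024oodgeneralization}, the argument mirrors theirs, extended here to the joint distribution across the $K+1$ time points, whose cross-covariances are supplied by the temporal dependence of the weight process within each bin.
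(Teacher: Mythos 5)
Your proof is correct and takes essentially the same route as the source: the paper does not reprove this lemma but imports it verbatim from \cite{jeong2024oodgeneralization}, whose argument is exactly your bin-wise reduction---normalized ratio representation, linearization of the denominator, Lindeberg--Feller CLT across the $m$ independent bins, and the Cram\'er--Wold device, with cross-covariances supplied by the temporal dependence of the weights within each bin. Indeed, the ratio form with numerator $\sum_{j}\int_{I_j}\psi(x)\,dx\,(W_j^t - E[W^t])$ and denominator $\sum_{j}W_j^t/m$ that you linearize is precisely the computation this paper itself performs in the proof of Theorem~\ref{thm:clt} when extending from bounded to square-integrable test functions, so your treatment of the normalizing constant matches the intended model.
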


In the following, we state two lemmas that would be helpful for proving Theorem~\ref{thm:clt} and Theorem~\ref{thm:ood-error}. 

\begin{lemma}\label{lemma:dist}
Assume the distributional perturbation model mentioned in Section~\ref{sec:model}. For any bounded function $\phi_k(D): \mathcal{D} \xrightarrow{} \mathbb{R}$ for $k = 1, \dots, K$, we have
\begin{equation}
    \sqrt{m}\left(
   \begin{pmatrix}
    {\mathbb{E}}^{t-1}[\phi_1(D)]\\
    \vdots \\
     {\mathbb{E}}^{t-K}[\phi_K(D)]
    \end{pmatrix} - 
    \begin{pmatrix}
    \mathbb{E}^t[\phi_1(D)]\\
    \vdots \\
     \mathbb{E}^t[\phi_K(D)] 
    \end{pmatrix}\right) \xrightarrow[]{d}
     N\left(0, \Sigma_{t, K}^W  \odot \text{Var}_{\mathbb{P}^0}(\phi(D))\right)
\end{equation}
where  $\phi(D)^{\intercal} = (\phi_1(D), \dots,  \phi_K(D)) \in \mathbb{R}^{K}$, and $\Sigma_{t, K}^W\in \mathbb{R}^{K \times K}$ has 
\begin{equation}\label{eq:sigma^w}
    (\Sigma_{t, K}^W)_{ij} = {Cov_P(W^{t-i}, W^{t-j})} + {Cov_P(W^{t}, W^{t})} - {Cov_P(W^{t-i}, W^{t})}- {Cov_P(W^{t-j}, W^{t})}.
\end{equation}
Here, $\odot$ denotes element-wise multiplication (Hadamard product).
\end{lemma}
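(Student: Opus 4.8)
The plan is to obtain Lemma~\ref{lemma:dist} from the distributional CLT of Lemma~\ref{lemma:prev-paper} by exhibiting the target vector as a fixed linear image of a jointly Gaussian vector and then reading off the limiting covariance. The first step is to recenter each coordinate at the parent distribution $\mathbb{P}^0$: for each $k$,
\begin{equation*}
    \sqrt{m}\left(\mathbb{E}^{t-k}[\phi_k(D)] - \mathbb{E}^{t}[\phi_k(D)]\right) = A_k - B_k,
\end{equation*}
where $A_k = \sqrt{m}\left(\mathbb{E}^{t-k}[\phi_k(D)] - \mathbb{E}^0[\phi_k(D)]\right)$ and $B_k = \sqrt{m}\left(\mathbb{E}^{t}[\phi_k(D)] - \mathbb{E}^0[\phi_k(D)]\right)$. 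Writing $A=(A_1,\dots,A_K)^{\intercal}$ and $B=(B_1,\dots,B_K)^{\intercal}$, the left-hand side of the lemma is the fixed linear image $A - B$ of the stacked vector $(A^{\intercal},B^{\intercal})^{\intercal}$, so it suffices to establish the joint asymptotic normality of $(A,B)$ and then invoke the continuous mapping theorem, since a linear image of a Gaussian limit is Gaussian.

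To obtain the joint limit of $(A,B)$ I would use the bin-wise representation underlying Lemma~\ref{lemma:prev-paper}. Letting $\mu_j^{\phi}$ denote the contribution $\int_{I_j}\phi\,d\mathbb{P}^0$ of bin $I_j$, the conditional perturbed mean admits the linear form $\mathbb{E}^{s}[\phi(D)] - \mathbb{E}^0[\phi(D)] = \sum_{j=1}^m (W_j^{s}-1)\,\mu_j^{\phi}$, a sum of terms that are independent across bins $j$ and have mean zero. Since each $\phi_k$ is bounded and the weights have finite variance, the Cram\'er--Wold device together with a Lindeberg CLT yields joint normality of all the $A_k$ and $B_k$, with mean zero. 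This is exactly the argument proving Lemma~\ref{lemma:prev-paper}; the only new feature is that the single time index $t$ now carries the $K$ distinct test functions $\phi_1,\dots,\phi_K$ simultaneously (through the $B_k$), which the bin-wise representation accommodates without change because all of these share the common weights $W_j^{t}$.

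Next I would compute the limiting covariance by bilinearity. As the bins shrink, $m\sum_{j} \mu_j^{\phi_i}\mu_j^{\phi_k} \to (\mathrm{Var}_{\mathbb{P}^0}(\phi))_{ik}$, so each of the four cross-covariances factorizes into a temporal weight covariance times this common parent-distribution factor, e.g.
\begin{equation*}
    \mathrm{Cov}(A_i, A_k) \to Cov_P(W^{t-i}, W^{t-k})\,(\mathrm{Var}_{\mathbb{P}^0}(\phi))_{ik},
\end{equation*}
and analogously for $\mathrm{Cov}(A_i,B_k)$, $\mathrm{Cov}(B_i,A_k)$, $\mathrm{Cov}(B_i,B_k)$ with the time pairs $(t-i,t)$, $(t,t-k)$, $(t,t)$. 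Expanding $\mathrm{Cov}(V_i,V_k) = \mathrm{Cov}(A_i,A_k) - \mathrm{Cov}(A_i,B_k) - \mathrm{Cov}(B_i,A_k) + \mathrm{Cov}(B_i,B_k)$ then collects precisely the four weight-covariance terms that define $(\Sigma_{t,K}^W)_{ik}$ in \eqref{eq:sigma^w}, all multiplied by the common factor $(\mathrm{Var}_{\mathbb{P}^0}(\phi))_{ik}$; this is exactly the Hadamard product $\Sigma_{t,K}^W \odot \mathrm{Var}_{\mathbb{P}^0}(\phi)$ claimed in the limit.

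The main obstacle is the second step rather than the covariance bookkeeping: justifying the joint CLT when several test functions are evaluated at the same time point, which goes slightly beyond the literal statement of Lemma~\ref{lemma:prev-paper} (where each time index carries a single test function). I would handle this by working directly from the bin-wise representation and verifying the Lindeberg condition there. A related subtlety worth flagging is that the normalization of the weights $W_j^{t}$ (ensuring $\mathbb{P}^{t}$ integrates to one) is what converts the raw second moment into the \emph{centered} covariance $(\mathrm{Var}_{\mathbb{P}^0}(\phi))_{ik}$ that appears in the limit; I would make sure this centering is carried consistently through the covariance computation.
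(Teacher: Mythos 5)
Your proof is correct and takes essentially the same route as the paper: the paper likewise recenters at $\mathbb{P}^0$, invokes the joint distributional CLT of Lemma~\ref{lemma:prev-paper} for the $2K$-dimensional vector $\left(\mathbb{E}^t[\phi_1]-\mathbb{E}^0[\phi_1],\dots,\mathbb{E}^t[\phi_K]-\mathbb{E}^0[\phi_K],\ \mathbb{E}^{t-1}[\phi_1]-\mathbb{E}^0[\phi_1],\dots,\mathbb{E}^{t-K}[\phi_K]-\mathbb{E}^0[\phi_K]\right)$, and applies the linear map $A-B$ exactly as you do. Your explicit covariance bookkeeping, and your observation that the literal statement of Lemma~\ref{lemma:prev-paper} attaches one test function per time index so that the time-$t$ coordinates carrying all $K$ functions jointly require a (routine) extension via the bin-wise representation, are details the paper leaves implicit.
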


\begin{proof}
    We apply the linear transformation to Lemma~\ref{lemma:prev-paper} on the asymptotic distribution of 
    \begin{equation*}
        \left(\mathbb{E}^t[\phi_1] - \mathbb{E}^0[\phi_1], \dots,  \mathbb{E}^t[\phi_K] - \mathbb{E}^0[\phi_K], 
         \mathbb{E}^{t-1}[\phi_1] - \mathbb{E}^0[\phi_1], \dots, 
         \mathbb{E}^{t-K}[\phi_K] - \mathbb{E}^0[\phi_K] \right) \in \mathbb{R}^{2K}
    \end{equation*}
    with a transformation matrix:
    \begin{equation*}
    A = \begin{pmatrix}
        -1 & 0 & \dots & 0 & 1 & 0 &\dots  & 0\\
         0  & -1 & \dots & 0 & 0 & 1 & \dots  & 0\\
         \vdots & \vdots &\vdots & \vdots & \vdots & \vdots & \vdots & \vdots \\
         0 & 0 & \dots & -1 &   0 & 0 & \dots  & 1\\
    \end{pmatrix}  \in \mathbb{R}^{K\times2K}.
\end{equation*}
This completes the proof. 
\end{proof}

Then, we add sampling uncertainty through the following lemma. 

\begin{lemma}\label{lemma:samp}
    Assume that the Lemma \ref{lemma:dist} holds.
    Suppose Assumption~\ref{ass:same-order}.
    Then for any bounded function $\phi_k(D): \mathcal{D} \xrightarrow{} \mathbb{R}$ for $k = 1, \dots, K$, we have
    \begin{equation}
        \sqrt{m}\left(
       \begin{pmatrix}
        \hat{\mathbb{E}}^{t-1}[\phi_1(D)]\\
        \vdots \\
         \hat{\mathbb{E}}^{t-K}[\phi_K(D)]
        \end{pmatrix} - 
        \begin{pmatrix}
        \mathbb{E}^{t}[\phi_1(D)]\\
        \vdots \\
         \mathbb{E}^{t}[\phi_K(D)] 
        \end{pmatrix}\right) \xrightarrow[]{d}
         N\left(0, (\Sigma^W_{t, K} + S) \odot \text{Var}_{\mathbb{P}^0}(\phi(D))\right)
    \end{equation}
    where $S = \text{Diag}(r_{t-1}, \dots, r_{t-K})$. Here, $\odot$ denotes element-wise multiplication (Hadamard product).
    \end{lemma}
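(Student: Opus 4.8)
The plan is to split each coordinate of the empirical deviation into a sampling contribution and a distributional contribution, treat each with its own central limit theorem, and then glue the two limits together by conditioning on the weights $W$. For each $k = 1, \dots, K$ I would write
\begin{equation*}
    \hat{\mathbb{E}}^{t-k}[\phi_k(D)] - \mathbb{E}^{t}[\phi_k(D)] = \big(\hat{\mathbb{E}}^{t-k}[\phi_k(D)] - \mathbb{E}^{t-k}[\phi_k(D)]\big) + \big(\mathbb{E}^{t-k}[\phi_k(D)] - \mathbb{E}^{t}[\phi_k(D)]\big),
\end{equation*}
where the first (sampling) term is a centered sample average over the $n_{t-k}$ i.i.d.\ draws from $\mathbb{P}^{t-k}$, and the second (distributional) term is $\sigma(W)$-measurable. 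The distributional vector is exactly the object whose limit is supplied by Lemma~\ref{lemma:dist}, namely $N(0, \Sigma^W_{t,K} \odot \text{Var}_{\mathbb{P}^0}(\phi(D)))$, so the work remaining is to analyze the sampling vector and to argue the two pieces combine additively.

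For the sampling vector I would work conditionally on $W$. Given $W$, the draws $\{D_{(t-k)i}\}_i$ are i.i.d.\ from $\mathbb{P}^{t-k}$, so the ordinary CLT yields $\sqrt{n_{t-k}}\big(\hat{\mathbb{E}}^{t-k}[\phi_k] - \mathbb{E}^{t-k}[\phi_k]\big) \xrightarrow[]{d} N(0, \text{Var}_{\mathbb{P}^{t-k}}(\phi_k))$ conditionally. Rescaling by $\sqrt{m/n_{t-k}}$ and invoking Assumption~\ref{ass:same-order} ($m/n_{t-k} \to r_{t-k}$), the $\sqrt{m}$-scaled sampling term acquires conditional limiting variance $r_{t-k}\,\text{Var}_{\mathbb{P}^{t-k}}(\phi_k)$. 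A key sub-step is that this conditional variance converges to a \emph{deterministic} limit: since $\mathbb{P}^{t-k}$ reweights the $m$ bins of $\mathbb{P}^0$ by mean-one i.i.d.\ weights, a law-of-large-numbers argument on the bin averages gives $\text{Var}_{\mathbb{P}^{t-k}}(\phi_k) \xrightarrow[]{p} \text{Var}_{\mathbb{P}^0}(\phi_k)$ as $m \to \infty$. Because distinct datasets are drawn independently given $W$, the sampling errors across different $k$ are conditionally independent, so their joint conditional limit is $N(0, S \odot \text{Var}_{\mathbb{P}^0}(\phi(D)))$ with $S = \text{Diag}(r_{t-1}, \dots, r_{t-K})$ diagonal.

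To fuse the two pieces I would use characteristic functions. Fixing $s \in \mathbb{R}^K$ and conditioning on $W$, the characteristic function of the full $\sqrt{m}$-scaled deviation factors into a $W$-measurable distributional factor times a conditional sampling factor. By the conditional CLT of the previous paragraph the latter converges in probability to the deterministic $\exp\big(-\tfrac{1}{2} s^\intercal (S \odot \text{Var}_{\mathbb{P}^0}) s\big)$, and taking expectation over $W$ (by dominated convergence, with the distributional factor converging per Lemma~\ref{lemma:dist}) produces the product $\exp\big(-\tfrac{1}{2} s^\intercal (\Sigma^W_{t,K} \odot \text{Var}_{\mathbb{P}^0}) s\big)\cdot \exp\big(-\tfrac{1}{2} s^\intercal (S \odot \text{Var}_{\mathbb{P}^0}) s\big)$. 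Since the limiting conditional sampling variance does not depend on $W$, the two sources are asymptotically independent and their covariances add, giving the claimed limit $N\big(0, (\Sigma^W_{t,K} + S) \odot \text{Var}_{\mathbb{P}^0}(\phi(D))\big)$. The main obstacle I anticipate is precisely this gluing step: rigorously marrying a conditional CLT whose conditional variance is itself random (but converging to a constant) with the unconditional distributional CLT, thereby justifying the asymptotic independence. The characteristic-function-plus-dominated-convergence route handles this cleanly, provided one checks that a conditional Lindeberg condition holds with probability tending to one, which follows from boundedness of the $\phi_k$ together with the convergence of the reweighted moments.
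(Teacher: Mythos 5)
Your proposal is correct, and its overall architecture matches the paper's: the same decomposition into a $\sigma(W)$-measurable distributional term (handled by Lemma~\ref{lemma:dist}) plus a conditionally centered sampling term, the same observation that $\mathrm{Var}_{\mathbb{P}^{t-k}}(\phi_k)\xrightarrow{p}\mathrm{Var}_{\mathbb{P}^0}(\phi_k)$ so that the conditional limiting variance is deterministic, and the same conditional independence across $k$ yielding the diagonal matrix $S$. Where you differ is the fusion step. You glue the two limits via characteristic functions: the conditional characteristic function of the sampling vector converges in probability to a deterministic Gaussian factor (via a conditional Lindeberg condition, trivially satisfied for bounded $\phi_k$), and dominated convergence over $W$ then factors the limit as a product, giving asymptotic independence and additive covariances. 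The paper instead works with conditional CDFs: it applies the Cram\'er--Wold device, uses a conditional Berry--Esseen bound to control $\sup_y|g_n(y;\xi)-\Phi(y)|$ uniformly, passes this through dominated convergence, and closes with the Gaussian convolution identity $E[\Phi(\delta x-\sqrt{\delta^2-1}\,Z)]=\Phi(x)$. Both routes are rigorous; yours is arguably the more standard and transparent way to establish this kind of mixed conditional/unconditional CLT (it sidesteps the ratio of third to second moments in the Berry--Esseen constant), while the paper's gives explicit uniform quantitative control of the conditional approximation, which makes the dominated-convergence step mechanical. The one point you should make explicit rather than merely flag is the convergence in probability of the conditional characteristic function to its deterministic limit; since the summands are bounded and the conditional variance converges in probability to a constant, this does follow, so there is no gap.
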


\begin{proof}
    For any $a_1, \dots, a_K \in \mathbb{R}$, we show that 
    \begin{equation*}
        \sqrt{m}\sum_{k=1}^K a_k \left( \hat{\mathbb{E}}^{t-k}[\phi_k(D)] -  \mathbb{E}^{t}[\phi_k(D)]\right)  \xrightarrow[]{d}  N\left(0, \vec{a}^{\intercal}\left( (\Sigma^W_{t, K} + S) \odot \text{Var}_{\mathbb{P}^0}(\phi(D))\right)\vec{a}\right).
    \end{equation*}
    Then, by the Cramer-Wold theorem, we have the desired result. 
    We want to show that for any $x \in \mathbb{R}$, 
    \begin{align}\label{eq:cdf}
        &E_{\xi}\left[P\left( \sqrt{m}\sum_{k=1}^K a_k \left( \hat{\mathbb{E}}^{t-k}[\phi_k(D)] -  \mathbb{E}^{t}[\phi_k(D)]\right) \leq x \cdot \sqrt{\vec{a}^{\intercal}((\Sigma^W_{t, K} + S) \odot \text{Var}_{\mathbb{P}^0}(\phi(D)))\vec{a}}\text{  }\Big | \text{ }\xi\right) \right] \nonumber \\ &= \Phi(x) + o(1), 
    \end{align}
    where $\Phi$ is the cdf of a standard Gaussian random variable and $\xi = \{W_j^{t-k}\}_{j=1, \dots, m,\text{ } k = 0, 1, \dots, K}$ encodes random perturbations. Note that the inner conditional expectation handles sampling variability (fixing random perturbations) and the outer expectation takes expectation over random perturbations. 

    Let us define 
    \begin{align*}
        A_n & =x \cdot \sqrt{\vec{a}^{\intercal}((\Sigma^W_{t, K} + S)\odot \text{Var}_{\mathbb{P}^0}(\phi(D))) \vec{a} }- \sqrt{m}\sum_{k=1}^K a_k\left( \mathbb{E}^{t-k}[\phi_k(D)] -  \mathbb{E}^{t}[\phi_k(D)]\right),\\
        B_n &= \sum_{k=1}^K a_k^2 \frac{\text{Var}_{\mathbb{P}^{t-k}}(\phi_k)}{n_{t-k}}.
    \end{align*}
    Then, we have
    \begin{equation*}
        \eqref{eq:cdf} = E_{\xi}\left[P\left( \sum_{k=1}^K\sum_{i=1}^{n_{t-k}} \frac{a_k}{n_{t-k}} \left( \phi_k(D_{t-k,i}) -  \mathbb{E}^{t-k}[\phi_k(D)]\right) \leq \sqrt{1/m} \cdot A_n \text{  }\Big | \text{ }\xi\right) \right].
    \end{equation*}
    By Berry-Essen (for non-identically distributed summands), we have that for all $n_{t-1}, \dots, n_{t-K}$, 
    \begin{equation*}
        \sup_y \Bigg|P\left( \sum_{k=1}^K\sum_{i=1}^{n_{t-k}} \frac{a_k}{n_{t-k}} \left( \phi_k(D_{t-k, i}) -  \mathbb{E}^{t-k}[\phi_k(D)]\right) \leq y \cdot \sqrt{B_n}\text{  }\Big | \text{ }\xi\right) -\Phi(y) \Bigg| \leq C_0 \cdot \psi_0
    \end{equation*}
    where 
    \begin{equation*}
        \psi_0 = \frac{\sum_{k=1}^K a_k^3\mathbb{E}^{t-k}[|\phi_k - \mathbb{E}^{t-k}[\phi_k] |^3] (N/n_{t-k})^2}{\left(\sum_{k=1}^K a_k^2\mathbb{E}^{t-k}[|\phi_k - \mathbb{E}^{t-k}[\phi_k] |^2] (N/n_{t-k})\right)^{3/2}\cdot \sqrt{N}}. 
    \end{equation*}
From the boundness assumption and Lemma~\ref{lemma:dist}, we have that 
\begin{align*}
    &\frac{\sum_{k=1}^K a_k^3\mathbb{E}^{t-k}[|\phi_k - \mathbb{E}^{t-k}[\phi_k] |^3] (N/n_{t-k})^2}{\left(\sum_{k=1}^K a_k^2\mathbb{E}^{t-k}[|\phi_k - \mathbb{E}^{t-k}[\phi_k] |^2] (N/n_{t-k})\right)^{3/2}} \\ &\xrightarrow[]{p}\frac{\sum_{k=1}^K a_k^3\mathbb{E}^{0}[|\phi_{t-k} - \mathbb{E}^{0}[\phi_k] |^3] (1/\pi_{t-k})^2}{\left(\sum_{k=1}^K a_k^2\mathbb{E}^{0}[|\phi_{t-k} - \mathbb{E}^{0}[\phi_k] |^2] (1/\pi_{t-k})\right)^{3/2}} < \infty
\end{align*}
as $N \xrightarrow{} \infty$, where $\frac{1}{\pi_{t-k}} = r_{t-k} \sum_{k=1}^K\frac{1}{r_{t-k}}$. 
Then, we have
 \begin{equation*}
        \sup_y \Bigg|P\left( \sum_{k=1}^K\sum_{i=1}^{n_{t-k}} \frac{a_k}{n_{t-k}} \left( \phi_k(D_{t-k,i}) -  \mathbb{E}^{t-k}[\phi_k(D)]\right) \leq y \cdot \sqrt{B_n}\text{  }\Big | \text{ }\xi\right) -\Phi(y) \Bigg| \xrightarrow[]{p} 0
    \end{equation*}
    Define $g_{n}(y; \xi)$ as
\begin{equation*}
    g_{n}(y ; \xi) = P\left( \frac{1}{\sqrt{B_n}}\sum_{k=1}^K\sum_{i=1}^{n_{t-k}} \frac{a_k}{n_{t-k}} \left( \phi_k(D_{t-k,i}) -  \mathbb{E}^{t-k}[\phi_k(D)]\right) \leq y \text{  }\Big | \text{ }\xi\right).
\end{equation*}
Note that
\begin{align*}
    \eqref{eq:cdf} &= E_{\xi}\left[g_n\left( A_n/\sqrt{m \cdot B_n}\right)\right] \\ &= E_{\xi}\left[g_n\left( A_n/\sqrt{m \cdot B_n}\right)\right] - E_{\xi}\left[\Phi\left( A_n/\sqrt{m \cdot B_n}\right)\right] + E_{\xi}\left[\Phi\left( A_n/\sqrt{m \cdot B_n}\right)\right]. \\ 
\end{align*}
By the dominated convergence theorem, we have
\begin{align*}
    \Big| \eqref{eq:cdf} - E_{\xi}\left[\Phi\left( A_n/\sqrt{m \cdot B_n}\right)\right]\Big|
    &\leq \Big| E_{\xi}\left[g_n\left( A_n/\sqrt{m \cdot B_n}\right)\right] - E_{\xi}\left[\Phi\left( A_n/\sqrt{m \cdot B_n}\right)\right]\Big|
    \\
    &\leq 
     E_{\xi}\left[\sup_y|g_{n}(y) - \Phi(y)|\right]= o(1).
\end{align*}
From Lemma~\ref{lemma:dist}, we have
\begin{align*}
    A_n  &\xrightarrow[]{d} x \cdot \sqrt{\vec{a}^{\intercal}((\Sigma^W_{t, K} + S)\odot \text{Var}_{\mathbb{P}^0}(\phi(D)))\vec{a}} -  \sqrt{\vec{a}^{\intercal}(\Sigma^W_{t, K} \odot \text{Var}_{\mathbb{P}^0}(\phi(D))) \vec{a}}\cdot Z, \\
    \sqrt{m\cdot B_n}  &\xrightarrow[]{p} \sqrt{\vec{a}^{\intercal}(S\odot \text{Var}_{\mathbb{P}^0}(\phi(D)))\vec{a}}
\end{align*}
where $Z$ is a standard Gaussian random variable. Let 
$$
\delta^2 = \frac{\vec{a}^{\intercal}((\Sigma^W_{t, K} + S) \odot \text{Var}_{\mathbb{P}^0}(\phi(D)))\vec{a}}{\vec{a}^{\intercal}(S\odot \text{Var}_{\mathbb{P}^0}(\phi(D)))\vec{a}}.
$$
Combining results, we have 
\begin{equation*}
    A_n/\sqrt{m \cdot B_n} \xrightarrow[]{d} \delta x - \sqrt{\delta^2-1} Z
\end{equation*}
where $Z$ is a standard Gaussian random variable.
Since $\Phi$ is bounded and continuous, by Portmanteau Lemma, we get
\begin{equation*}
    \lim_{N \xrightarrow[]{} \infty} E_{\xi}[\Phi( A_n/\sqrt{m \cdot B_n} )] = E[\Phi(\delta x - \sqrt{\delta^2-1} Z)] = \Phi(x).
\end{equation*}
This completes the proof.
\end{proof}

\subsection{Proof of Theorem~\ref{thm:clt}}

\begin{proof}
    From Lemma~\ref{lemma:dist} and Lemma~\ref{lemma:samp}, we have that for any bounded function $\phi(D): \mathcal{D} \xrightarrow{} \mathbb{R}^d$, 
    \begin{align*}
    \sqrt{m}\left(\mathbb{E}^{t}[\phi(D)]  - \sum_{k=1}^K \beta_k \hat{\mathbb{E}}^{t-k}[\phi(D)]\right) \xrightarrow[]{d}
     N\left(0, \Tilde{\delta}^2 \cdot \text{Var}_{\mathbb{P}^0}(\phi(D))\right)
\end{align*}
where $\Tilde{\delta}^2= \beta^{\intercal}(\Sigma^W_{t, K}+ S) \beta$ with 
\begin{align*}
    (\Sigma_{t, K}^W)_{ij} &= {Cov_P(W^{t-i}, W^{t-j})} + {Cov_P(W^{t}, W^{t})} - {Cov_P(W^{t-i}, W^{t})}- {Cov_P(W^{t-j}, W^{t})}, \\
    S &= \text{Diag}(r_{t-1}, \dots, r_{t-K}).
\end{align*}
The extension to multivariate function $\phi(D) = (\phi_1(D), \dots, \phi_d(D))^{\intercal} \in \mathbb{R}^d$ comes from considering a sequence:
\begin{equation*}
    (\hat{\mathbb{E}}^{t-1}[\phi_1(D)],\dots, \hat{\mathbb{E}}^{t-1}[\phi_d(D)], 
    \dots \dots, \hat{\mathbb{E}}^{t-K}[\phi_1(D)], \dots, \hat{\mathbb{E}}^{t-K}[\phi_d(D)]),
\end{equation*}
in Lemma~\ref{lemma:dist} and Lemma~\ref{lemma:samp}.

Note that $\Tilde{\delta}^2$ can be written as 
\begin{equation*}
    \Tilde{\delta}^2 =\left(E[(W^t - \sum_{k=1}^K \beta_k W^{t-k})^2] + \sum_{k=1}^K \beta_k^2 r_{t-k}\right).
\end{equation*}

We now extend this result from bounded functions $\phi $ to square integrable functions $\phi \in L^2(\mathbb{P})$. Without loss of generality for notational simplicity we restrict ourselves to the case $d=1$. Note that for any $\phi \in L^2(\mathbb{P})$ and for any given $\epsilon >0$, there exists a bounded function $\phi^B$ such that $\mathbb{E}^0[\phi(D)] = \mathbb{E}^0[\phi^B(D)]$ and $||\phi - \phi^B||_{L^2(\mathbb{P}^0)} < \epsilon$.

Applying Chebychev conditionally on random perturbations,
\begin{equation*}
    P_{\xi}\left( \Bigg| \sqrt{m} \sum_{k=1}^K\beta_k \hat{\mathbb{E}}^{t-k} [\phi(D) - \phi^B(D)] \Bigg| \geq \epsilon \text{ } \Bigg| \text{ } \xi\right) \leq \frac{m}{\epsilon^2} \sum_{k=1}^{K}\frac{\beta_k^2}{n_{t-k}}\text{Var}_{\mathbb{P}^{t-k}}(\phi^B - \phi)
\end{equation*}
where $\xi = \{W_j^{t-k}\}_{j=1, \dots, m,\text{ } k =0, 1, \dots, K}$ encodes random perturbations. Note that the conditional expectation on random perturbations handles sampling variability. Take expectation over random perturbations and we get
\begin{align*}
    P\left(  \sqrt{m} \sum_{k=1}^K \beta_k \hat{\mathbb{E}}^{t-k} [\phi(D) - \phi^B(D)] \Bigg| \geq \epsilon \right) &\leq \frac{m}{\epsilon^2} \sum_{k=1}^{K}\frac{\beta_k^2}{n_{t-k}}\mathbb{E}^0[(\phi - \phi^B)^2] \\
    &= \frac{1}{\epsilon^2} \sum_{k=1}^{K}\beta_k^2r_{t-k}\mathbb{E}^0[(\phi - \phi^B)^2] +o(1).
\end{align*}
Note that with $D = h(U)$ and $\psi = (\phi - \phi^B)\circ h$,
    \begin{align*}
        \sqrt{m}(\mathbb{E}^{t}[\phi(D)] - \mathbb{E}^{t}[\phi^B(D)]) &= \frac{\sqrt{m}\sum_{j=1}^{m}\int_{x \in I_j} \psi(x) dx \cdot (W_j^t - E[W^t])}{\sum_{j=1}^{m}W_j^t/m}.
    \end{align*}
    The variance of the numerator is bounded as
    \begin{align*}
        \text{Var}(W^t) \sum_{j=1}^{m}m\left(\int_{x \in I_j}\psi(x)dx\right)^2 
        & \leq \text{Var}(W^t) \sum_{j=1}^{m}\int_{x \in I_j}\psi^2(x)dx \\
        & = \text{Var}(W^t) \mathbb{E}^0[\psi^2(U)] = \text{Var}(W^t) \mathbb{E}^0[(\phi- \phi^B)^2]
    \end{align*}
    where the first inequality holds by Jensen's inequality with $m \int_{x\in I_j} dx = 1$.  The denominator converges in probability to $E[W^t] = 1$. Therefore, we can write
    \begin{equation*}
         \sqrt{m}(\mathbb{E}^{t}[\phi(D)] - \mathbb{E}^{t}[\phi^B(D)]) = \sqrt{m}\sum_{j=1}^{m}\int_{x \in I_j} \psi(x) dx \cdot (W_j^t - E[W^t]) + R_n^t
    \end{equation*}
    where $R_n^t$ is $o_p(1)$. By Chebychev's inequality, we have
    \begin{equation*}
         P\left(  \Big|\sqrt{m}(\mathbb{E}^{t}[\phi(D) - \phi^B(D)]) - R_n^t \Big| \geq \epsilon \right) \leq \frac{1}{\epsilon^2} \text{Var}(W^t) \mathbb{E}^0[(\phi- \phi^B)^2]
    \end{equation*}
Combining results, for any $\epsilon, \epsilon' > 0$, we can find a bounded function $\phi^B$ as $m(N) \xrightarrow[]{} \infty$ such that
\begin{align*}
    P\left(\sqrt{m} \Bigg|\mathbb{E}^{t}[\phi- \phi^B] - \sum_{k=1}^K \beta_k \hat{\mathbb{E}}^{t-k}[\phi-\phi^B]\Bigg| \geq \epsilon\right) &\leq P\left(\sqrt{m} \Big|\mathbb{E}^{t}[\phi- \phi^B] - R_n^t\Big| \geq \frac{\epsilon}{3}\right)\\ &+ P\left(|R_n^t| \geq \frac{\epsilon}{3}\right) \\
    &+ P\left(\sqrt{m} \Big|\sum_{k=1}^K \beta_k \hat{\mathbb{E}}^{t-k}[\phi-\phi^B]\Big| \geq \frac{\epsilon}{3}\right)  \leq \epsilon'.
\end{align*}
We already have convergence results for bounded functions: 
\begin{equation*}
     \sqrt{m}\left(\mathbb{E}^{t}[\phi^B(D)]  - \sum_{k=1}^K \beta_k \hat{\mathbb{E}}^{t-k}[\phi^B(D)]\right) \xrightarrow[]{d}
     N\left(0, \Tilde{\delta}^2 \cdot \text{Var}_{\mathbb{P}^0}(\phi^B(D))\right).
\end{equation*}
Finally, for any $\epsilon, \epsilon', \epsilon'' > 0$, we can find a bounded function $\phi^B$ as $m(N) \xrightarrow[]{} \infty$ such that
\begin{align*}
    &P \left( \frac{\sqrt{m}(\mathbb{E}^t[\phi(D)] - \sum_{k=1}^K \beta_k \hat{\mathbb{E}}^{t-k}[\phi(D)])}{\Tilde{\delta}\sqrt{\text{Var}_{\mathbb{P}^0}(\phi(D))}}\leq x \right) \\
    & = P \left( \frac{\sqrt{m}(\mathbb{E}^t[\phi^B] - \sum_{k=1}^K \beta_k \hat{\mathbb{E}}^{t-k}[\phi^B])}{\Tilde{\delta}\sqrt{\text{Var}_{\mathbb{P}^0}(\phi^B)}} \cdot \frac{\sqrt{\text{Var}_{\mathbb{P}^0}(\phi^B)}}{\sqrt{\text{Var}_{\mathbb{P}^0}(\phi)}} + \frac{\sqrt{m}(\mathbb{E}^t[\phi-\phi^B] - \sum_{k=1}^K \beta_k \hat{\mathbb{E}}^{t-k}[\phi-\phi^B])}{\Tilde{\delta}\sqrt{\text{Var}_{\mathbb{P}^0}(\phi)}}\leq x \right) \\
    & \leq P \left( \frac{\sqrt{m}(\mathbb{E}^t[\phi^B] - \sum_{k=1}^K \beta_k \hat{\mathbb{E}}^{t-k}[\phi^B])}{\Tilde{\delta}\sqrt{\text{Var}_{\mathbb{P}^0}(\phi^B)}} \cdot \frac{\sqrt{\text{Var}_{\mathbb{P}^0}(\phi^B)}}{\sqrt{\text{Var}_{\mathbb{P}^0}(\phi)}} \leq x + \epsilon \right) \\
    &+P \left( \frac{\sqrt{m}|\mathbb{E}^t[\phi-\phi^B] - \sum_{k=1}^K \beta_k \hat{\mathbb{E}}^{t-k}[\phi-\phi^B]|}{\Tilde{\delta}\sqrt{\text{Var}_{\mathbb{P}^0}(\phi)}} > \epsilon \right) \\
    &\leq \Phi\left((x+\epsilon) \cdot \frac{\sqrt{\text{Var}_{\mathbb{P}^0}(\phi)}}{\sqrt{\text{Var}_{\mathbb{P}^0}(\phi^B)}}\right) + \epsilon',
\end{align*}
and $|\sqrt{\text{Var}_{\mathbb{P}^0}(\phi^B)}/\sqrt{\text{Var}_{\mathbb{P}^0}(\phi)} - 1 | \leq \epsilon''$. Therefore, 
\begin{equation*}
    \limsup_{N \xrightarrow[]{}\infty}P \left( \frac{\sqrt{m}(\mathbb{E}^t[\phi(D)] - \sum_{k=1}^K \beta_k \hat{\mathbb{E}}^{t-k}[\phi(D)])}{\Tilde{\delta}\sqrt{\text{Var}_{\mathbb{P}^0}(\phi(D))}}\leq x \right) \leq \Phi(x). 
\end{equation*}
With an analogous argument, 
\begin{equation*}
    \liminf_{N \xrightarrow[]{}\infty}P \left( \frac{\sqrt{m}(\mathbb{E}^t[\phi(D)] - \sum_{k=1}^K \beta_k \hat{\mathbb{E}}^{t-k}[\phi(D)])}{\Tilde{\delta}\sqrt{\text{Var}_{\mathbb{P}^0}(\phi(D))}}\leq x \right) \geq \Phi(x). 
\end{equation*}
This completes the proof for one-dimensional $\phi$. The result for a vector of functions $\phi$ follows by applying the Cram\'er--Wold device.
\end{proof}

\subsection{Proof of Theorem~\ref{thm:ood-error}}\label{sec:proof-of-ood-error}

\begin{proof}
The first part of proof follows \cite{van2000asymptotic}, Theorem 5.41 with 
\begin{align*}
    \Psi_n(\theta) &= \sum_{k=1}^K \beta_k \hat{\mathbb{E}}^{t-k}[\partial_\theta \mathcal{L}(Y, f(X; \theta))] \\
    \dot{\Psi}_n(\theta) &= \sum_{k=1}^K \beta_k \hat{\mathbb{E}}^{t-k}[ \partial_\theta^2 \mathcal{L}(Y, f(X; \theta))].
\end{align*}
By Taylor's theorem there exists a (random vector) $\tilde{\theta}$ on the line segment between $\theta^0$ and $\hat{\theta}^{\beta}_t$ such that 
\begin{equation*}
    0 = \Psi_n(\hat{\theta}_t^{\beta}) = \Psi_n(\theta^0) +  \dot{\Psi}_n(\theta^0) (\hat{\theta}^{\beta}_t - \theta^0)+ \frac{1}{2}(\hat{\theta}^{\beta}_t - \theta^0)^{\intercal} \Ddot{\Psi}_n(\tilde \theta) (\hat{\theta}^{\beta}_t - \theta^0).
\end{equation*}
By Lemma~\ref{thm:clt} with $\phi(D) = \partial_\theta^2 \mathcal{L}(\theta^0,D)$, we have
\begin{align}
    \dot{\Psi}_n(\theta^0)  &= \sum_{k=1}^K\beta_k\frac{1}{n_{t-k}} \sum_{i=1}^{n_{t-k}} \partial_\theta^2 \mathcal{L}(Y_{t-k,i}, f(X_{t-k,i}; \theta^0)) \nonumber \\ &= \sum_{k=1}^K \beta_k \mathbb{E}^0[ \partial_\theta^2 \mathcal{L}(Y, f(X; \theta^0)) ] + o_P(1) = \mathbb{E}^0[ \partial_\theta^2 \mathcal{L}(Y, f(X; \theta^0)) ] + o_P(1). \label{eq:lemma1-pf-1}
\end{align}
By assumption, there exists a ball $B$ around $\theta^0$ such that $\theta \xrightarrow[]{} \partial_\theta^3 \mathcal{L}(\theta,D)$ is dominated by a fixed function $h(\cdot)$ for every $\theta \in B$. The probability of the event $\{\hat{\theta}^{\beta}_t \in B\}$ tends to 1. On this event,
\begin{equation*}
    \|\Ddot{\Psi}_n(\tilde \theta)\| = \Big|\Big| \sum_{k=1}^K\beta_k\frac{1}{n_k} \sum_{i=1}^{n_{t-k}} \partial_\theta^3 \mathcal{L}(Y_{t-k,i}, f(X_{t-k,i}; \tilde \theta))\Big|\Big| \leq \sum_{k=1}^K \beta_k\frac{1}{n_{t-k}} \sum_{i=1}^{n_{t-k}} h(D_{t-k,i}).
\end{equation*}
Using Theorem~\ref{thm:clt} with $\phi(D) = h(D)$, we get
\begin{equation}\label{eq:lemma1-pf-2}
       \| \Ddot{\Psi}_n(\tilde \theta) \|   \le \sum_{k=1}^K\beta_k\frac{1}{n_{t-k}} \sum_{i=1}^{n_{t-k}} h(D_{t-k,i}) = O_P(1).
\end{equation}
Combining \eqref{eq:lemma1-pf-1} and \eqref{eq:lemma1-pf-2} gives us
\begin{align*}
    -\Psi_n(\theta^0) & = \left(\mathbb{E}^0[ \partial_\theta^2 \mathcal{L}(Y, f(X; \theta^0)) ] + o_P(1) + \frac{1}{2}(\hat{\theta}_t^{\beta} - \theta^0)\text{ }O_P(1)\right)(\hat{\theta}_t^{\beta} - \theta^0) \\ &= \left(\mathbb{E}^0[ \partial_\theta^2 \mathcal{L}(Y, f(X; \theta^0)) ] + o_P(1) \right)(\hat{\theta}_t^{\beta} - \theta^0).
\end{align*}
The probability that the matrix $\mathbb{E}^0[ \partial_\theta^2 \mathcal{L}(\theta^0,D) ] + o_P(1) $ is invertible tends to 1.  Multiplying the preceding equation by $\sqrt{m}$ and applying $(\mathbb{E}^0[ \partial_\theta^2 \mathcal{L}(\theta^0,D) ] + o_P(1))^{-1} $ left and right gives us that
\begin{equation*}
    \sqrt{m}\left(\hat{\theta}_t^{\beta} - \theta^0\right) = \sqrt{m}\left(\sum_{k=1}^K \beta_k \hat{\mathbb{E}}^k[\phi(D)] - \mathbb{E}^0[\phi(D)]\right) + o_P(1),
\end{equation*}
where $\phi(D) = - \mathbb{E}^0[\partial_\theta^2 \mathcal{L}(\theta^0,D)]^{-1}  \partial_\theta \mathcal{L}(\theta^0,D)$. Then, we apply the same procedure with 
\begin{align*}
    \Psi_n(\theta) &=  \mathbb{E}^{t}[\partial_\theta \mathcal{L}(Y, f(X; \theta))] \\
    \dot{\Psi}_n(\theta) &=  \mathbb{E}^{t}[\partial_\theta^2 \mathcal{L}(Y, f(X; \theta))].
\end{align*}
Analogously, we get
\begin{equation*}
    \sqrt{m}\left(\theta_t - \theta^0\right) = \sqrt{m}\left( {\mathbb{E}}^t[\phi(D)] - \mathbb{E}^0[\phi(D)]\right) + o_P(1).
\end{equation*}
Combining results, we get
\begin{equation}\label{eq:theta_asymp_lin}
    \sqrt{m}\left(\hat{\theta}_t^{\beta} - \theta_t\right) = \sqrt{m}\left(\sum_{k=1}^K \beta_k \hat{\mathbb{E}}^k[\phi(D)] - \mathbb{E}^t[\phi(D)]\right) + o_P(1).
\end{equation}
By Theorem~\ref{thm:clt}, we have
\begin{equation*}
    \sqrt{m}\left(\hat{\theta}_t^{\beta} - \theta_t\right) \xrightarrow[]{d} N(0, \Tilde{\delta}^2 \cdot  \text{Var}_{\mathbb{P}^0}(\phi(D))),
\end{equation*}
where
\begin{equation*}
    \Tilde{\delta}^2 = E[(W^{t} - \sum_{k=1}^K \beta_k W^{t-k})^2] + \sum_{k=1}^K \beta_k^2 r_{t-k}.
\end{equation*}

Now let $\Phi(\theta) =  \mathbb{E}^t[\partial_{\theta}\mathcal{L}(Y, f(X;\theta))], \dot{\Phi}(\theta) = \mathbb{E}^t[\partial_{\theta}^2\mathcal{L}(Y, f(X; \theta))], \Ddot{\Phi}(\theta) = \mathbb{E}^t[\partial_{\theta}^3\mathcal{L}(Y, f(X; \theta))]$. By Taylor's theorem, there exist $\tilde{\theta}, \tilde{\tilde{\theta}}$,  on the line segment between $\hat{\theta}^{\beta}_t$ and $\theta_t$ such that 
\begin{align*}
     \mathbb{E}^t[\mathcal{L}(Y, f(X; \hat \theta^\beta_t))]  -  \mathbb{E}^t[\mathcal{L}( Y, f(X; \theta_t))] &= (\hat{\theta}^{\beta}_t - \theta_t)^{\intercal}{\Phi}(\theta_t)  + \frac{1}{2}(\hat{\theta}^{\beta}_t - \theta_t)^{\intercal} \dot{\Phi}(\tilde \theta) (\hat{\theta}^{\beta}_t - \theta_t) \\
     &= \frac{1}{2}(\hat{\theta}^{\beta}_t - \theta_t)^{\intercal} \left( \dot{\Phi}(\theta_t) + \Ddot{\Phi}(\tilde{\tilde{\theta}}) (\tilde{\theta} - \theta_t)\right) (\hat{\theta}^{\beta}_t - \theta_t) \\
     & = \frac{1}{2}(\hat{\theta}^{\beta}_t - \theta_t)^{\intercal} \left( \dot{\Phi}(\theta_t) + o_P(1) \right) (\hat{\theta}^{\beta}_t - \theta_t),
\end{align*}
as ${\Phi}(\theta_t) = 0$ and $\Ddot{\Phi}(\tilde{\tilde{\theta}}) \leq \mathbb{E}^0[h(D)] + o_P(1)$ with probability approaching 1. Similarly, there exists $\theta'$ on the line segment between $\theta_t$ and $\theta^0$ such that 
\begin{equation*}
    \dot{\Phi}(\theta_t) = 
    \dot{\Phi}(\theta^0) +
    \Ddot{\Phi}(\theta')(\theta' - \theta^0)  = \mathbb{E}^0[\partial_{\theta}^2\mathcal{L}(Y, f(X; \theta^0))] + o_P(1).
\end{equation*}
Then  the rescaled excess risk can be written as 
\begin{align}\label{eq:risk-taylor}
  m \cdot \left( \mathbb{E}^t[\mathcal{L}(Y, f(X; \hat \theta^\beta_t))]  -  \mathbb{E}^t[\mathcal{L}(Y, f(X;  \theta_t))] \right)\nonumber \\ = m \cdot (\hat \theta^\beta_t - \theta_t)^\intercal \mathbb{E}^0[\partial_\theta^2 \mathcal{L}(Y, f(X;\theta^0))]   (\hat \theta^\beta_t - \theta_t) + o_P(1).
\end{align}
Using \eqref{eq:theta_asymp_lin}, asymptotically \eqref{eq:risk-taylor} follows a distribution with asymptotic mean
\begin{equation*}
 \Tilde{\delta}^2 \cdot \mathrm{Trace}( \mathbb{E}^0[\partial_\theta^2 \mathcal{L}(Y, f(X; \theta^0))]^{-1} \mathrm{Var}_{\mathbb{P}^0}(  \partial_\theta \mathcal{L}(Y, f(X; \theta^0)) ))
\end{equation*}
This completes the proof. 
\end{proof}

\subsection{Consistency}~\label{sec:consistency}
In the following, we add regularity conditions that lead to the consistency of M-estimators, $\hat{\theta}_{t}^{\beta}$. 

\begin{lemma}[Consistency of M-estimators]\label{lemma:consistency} Consider the M-estimator
$$
\hat{\theta}_t^{\beta} = \arg \min_{\theta \in \Omega} \sum_{k=1}^K \beta_k \hat{\mathbb{E}}^{t-k}[L(\theta, D)] 
$$
and the target $\theta^0 = \arg \min_{\theta\in \Omega}\mathbb{E}^0[L(\theta, D)]$, where $\Omega$ is a compact subset of $\mathbb{R}^d$. 
    Assume that $\theta \xrightarrow[]{} L(\theta, D)$ is continuous and that $\inf_{||\theta - \theta'||_2 \leq \delta}L(\theta, D)$ is square-integrable under $\mathbb{P}^0$ for every $\delta$ and $\theta'$ and that $\inf_{\theta \in \Omega}L(\theta, D)$ is square-integrable. We assume that $\mathbb{E}^0[L(\theta, D)]$ has a unique minimum. Then, $\hat{\theta}_t^{\beta} - \theta^0 = o_P(1), \theta_t - \theta^0 = o_p(1)$.
\end{lemma}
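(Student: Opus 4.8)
The plan is to follow the classical well-separated-minimum route to consistency (in the spirit of \citep[Theorem 5.7]{van2000asymptotic}), adapted to the weighted empirical criterion arising under the random shift model. Write $M_n(\theta) = \sum_{k=1}^K \beta_k \hat{\mathbb{E}}^{t-k}[L(\theta,D)]$ for the objective minimized by $\hat{\theta}_t^{\beta}$ and $M(\theta) = \mathbb{E}^0[L(\theta,D)]$ for the population objective minimized by $\theta^0$. Two ingredients drive the argument: (i) $M$ has a \emph{well-separated} minimum at $\theta^0$, meaning that for every open neighborhood $G \ni \theta^0$ one has $\inf_{\theta \in \Omega \setminus G} M(\theta) > M(\theta^0)$; and (ii) a one-sided uniform law controlling $M_n$ from below on $\Omega \setminus G$ together with $M_n(\theta^0) \to M(\theta^0)$. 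Granting both, the defining inequality $M_n(\hat{\theta}_t^{\beta}) \le M_n(\theta^0)$ forces $\hat{\theta}_t^{\beta} \in G$ with probability tending to one, and since $G$ is arbitrary this is exactly $\hat{\theta}_t^{\beta} - \theta^0 = o_P(1)$.

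Ingredient (i) is the easy part: $\theta \mapsto M(\theta)$ is continuous by continuity of $\theta \mapsto L(\theta,D)$ and dominated convergence against the integrable local envelopes, $\Omega$ is compact, and $M$ has a unique minimizer by assumption; hence the minimum is well-separated. For ingredient (ii), the main tool is Theorem~\ref{thm:clt}. Applied with a fixed square-integrable test function $\phi$, it gives $\sum_{k=1}^K \beta_k \hat{\mathbb{E}}^{t-k}[\phi(D)] = \mathbb{E}^t[\phi(D)] + o_P(1)$; combined with the vanishing of the distributional variation as $m \to \infty$ (so that $\mathbb{E}^t[\phi(D)] = \mathbb{E}^0[\phi(D)] + o_P(1)$, exactly as in the numerator/denominator LLN step of the proof of Theorem~\ref{thm:clt}) and with $\sum_k \beta_k = 1$, this yields the pointwise statement $M_n(\theta) \xrightarrow[]{p} M(\theta)$ for each $\theta$ with square-integrable integrand. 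This already covers $M_n(\theta^0) \to M(\theta^0)$.

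To upgrade pointwise convergence to the needed lower-uniform control I would use a finite covering / bracketing argument. Cover the compact set $\Omega \setminus G$ by finitely many balls $B_1, \dots, B_J$ of small radius $\delta$ and use the monotonicity bound $\inf_{\theta \in B_\ell} M_n(\theta) \ge \sum_{k=1}^K \beta_k \hat{\mathbb{E}}^{t-k}[\,\inf_{\theta \in B_\ell} L(\theta,D)\,]$. Each function $\inf_{\theta \in B_\ell} L(\theta,D)$ is square-integrable by the hypothesis on local infima, so Theorem~\ref{thm:clt} applies and the right-hand side converges in probability to $\mathbb{E}^0[\inf_{\theta \in B_\ell} L(\theta,D)]$. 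As $\delta \to 0$, continuity of $L$ and dominated convergence (again via the integrable lower envelope) give $\mathbb{E}^0[\inf_{\theta \in B_\ell} L(\theta,D)] \to M(\theta_\ell)$ for the centers $\theta_\ell$; choosing $\delta$ small enough, these bracket values all exceed $M(\theta^0) + \eta$ for some $\eta > 0$ by well-separatedness. Taking the minimum over the finitely many balls then yields $\inf_{\theta \in \Omega \setminus G} M_n(\theta) \ge M(\theta^0) + \eta/2$ with probability tending to one, closing the argument. The estimator $\theta_t = \argmin_\theta \mathbb{E}^t[L(\theta,D)]$ is handled identically, replacing $M_n$ by $M^t(\theta) = \mathbb{E}^t[L(\theta,D)]$ and using only the distributional convergence $\mathbb{E}^t[\phi(D)] \to \mathbb{E}^0[\phi(D)]$ (with no sampling term). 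The main obstacle is ingredient (ii): establishing a one-sided uniform law under a sampling scheme in which both $m$ and the $n_{t-k}$ grow, rather than a standard i.i.d.\ uniform LLN. This is precisely where invoking Theorem~\ref{thm:clt} on the finitely many bracketing functions—whose square-integrability is guaranteed by the local-infimum hypothesis—does the heavy lifting, so that the only delicate remaining step is closing the gap between $\inf_{B_\ell} M_n$ and the bracket value as $\delta \to 0$ via dominated convergence.
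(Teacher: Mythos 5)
Your proposal is correct and follows essentially the same route as the paper's proof, which is Wald's consistency argument in the form of \cite{van2000asymptotic}, Theorem 5.14: a finite cover of the complement of a neighborhood of $\theta^0$ by small balls, local envelope functions ($\inf_{B_\ell} L$, or equivalently the paper's $m_U = \sup_U(-L)$) whose square-integrability lets one invoke Theorem~\ref{thm:clt} as a law of large numbers on finitely many test functions, and monotone/dominated convergence to push the envelope expectations strictly above $M(\theta^0)$. The only cosmetic differences are the sign convention and that you phrase the conclusion via well-separatedness (Theorem 5.7 style) while the paper argues by direct contradiction on the event $\{\hat{\theta}_t^{\beta} \in B\}$.
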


\begin{proof}
    The proof follows \cite{van2000asymptotic}, Theorem 5.14 with $m_\theta(D) = - L(\theta,D)$. 
    
    Fix some $\theta$ and let $U_{\ell} \downarrow \theta$ be a decreasing sequence of open balls around $\theta$ of diameter converging to zero. Write $m_{U}(D)$ for $\sup_{\theta \in U}m_{\theta}(D)$. The sequence $m_{U_{\ell}}$ is decreasing and greater than $m_{\theta}$ for every $\ell$. As $\theta \xrightarrow[]{} m_{\theta}(D)$ is continuous, by monotone convergence theorem, we have $\mathbb{E}^0[m_{U_{\ell}}(D)] \downarrow \mathbb{E}^0[m_{\theta}(D)]$. 

    For $\theta \neq \theta^0$, we have $\mathbb{E}^0[m_\theta(D)] < \mathbb{E}^0[m_{\theta^0}(D)]$.  Combine this with the preceding paragraph to see that for every $\theta \neq \theta^0$ there exits an open ball $U_{\theta}$ around $\theta$ with $\mathbb{E}^0[m_{U_{\theta}}(D)] < \mathbb{E}^0[m_{\theta^0}(D)]$. For any given $\epsilon > 0$, let the set $B = \{\theta \in \Omega : ||\theta - \theta^0||_2 \geq \epsilon\}$. The set $B$ is compact and is covered by the balls $\{U_{\theta}: \theta \in B\}$. Let $U_{\theta_1},\dots, U_{\theta_p}$ be a finite sub-cover. Then, 
\begin{align}
    \sup_{\theta \in B} \sum_{k=1}^K\beta_k \frac{1}{n_k} \sum_{i=1}^{n_k} m_{\theta}(D_{t-k,i}) & \leq \sup_{j = 1, \dots, p} \sum_{k=1}^K\beta_k \frac{1}{n_{t-k}} \sum_{i=1}^{n_{t-k}} m_{U_{\theta_j}}(D_{t-k,i})
    \nonumber \\ &= \sup_{j = 1, \dots, p} \mathbb{E}^0[ m_{U_{\theta_j}}(D)] + o_P(1) < \mathbb{E}^0[m_{\theta^0}(D)] + o_P(1) \label{eq:pf-lemma2},
\end{align}
where for the equality we apply Theorem~\ref{thm:clt} (without the linear transformation in Lemma~\ref{lemma:dist}) with $\phi(D) = m_{U_{\theta_j}}(D)$ for all $j=1,\ldots,p$. 

If $\hat{\theta}_t^{\beta} \in B$, then 
\begin{align*}
    \sup_{\theta \in B} \sum_{k=1}^K\beta_k \frac{1}{n_{t-k}} \sum_{i=1}^{n_{t-k}} m_{\theta}(D_{t-k,i}) &\geq \sum_{k=1}^K\beta_k \frac{1}{n_{t-k}} \sum_{i=1}^{n_{t-k}} m_{\hat{\theta}_t^{\beta}}(D_{t-k,i}) \\ &\geq \sum_{k=1}^K\beta_k \frac{1}{n_{t-k}} \sum_{i=1}^{n_{t-k}} m_{\theta^0}(D_{t-k,i}) - o_P(1),
\end{align*}
where the last inequality comes from the definition of $\hat{\theta}_t^{\beta}$.
Using Theorem~\ref{thm:clt} (without the linear transformation in Lemma~\ref{lemma:dist}) with $\phi(D) = m_{\theta^0}(D)$, we have
\begin{equation*} 
   \sum_{k=1}^K\beta_k \frac{1}{n_{t-k}} \sum_{i=1}^{n_{t-k}} m_{\theta^0}(D_{t-k,i}) - o_P(1) = \mathbb{E}^0[m_{\theta^0}(D)] - o_P(1).
\end{equation*}
Therefore, 
\begin{equation*}
    \{\hat{\theta}_t^{\beta} \in B\} \subset \Bigg\{ \sup_{\theta \in B} \sum_{k=1}^K\beta_k \frac{1}{n_{t-k}} \sum_{i=1}^{n_{t-k}} m_{\theta}(D_{t-k,i}) \geq \mathbb{E}^0[m_{\theta^0}(D)] - o_P(1) \Bigg\}.
\end{equation*}
By the equation \eqref{eq:pf-lemma2}, the probability of the event on the right hand side converges to zero as $n \xrightarrow[]{} \infty$. Similarly, the consistency of $\theta_t - \theta^0 = o_p(1)$ can be proved. 
\end{proof}

\subsection{Proof of Proposition~\ref{prop:consistency}}

\begin{proof}
We first send $m(n), n \xrightarrow[]{} \infty$. Note that 
\begin{equation}\label{eq:opt-quad}
    \beta^*  = \argmin_{\substack{\beta: \beta^{\intercal}1 = 1 \\ \beta\geq 0}} \beta^{\intercal} \left(\Sigma^W + r\cdot I\right) \beta = \argmin_{\substack{\beta: \beta^{\intercal}1 = 1 \\ \beta\geq 0}} \beta^{\intercal} \tilde{\Sigma} \beta,
\end{equation}
where $\Sigma^W$ is defined in \eqref{eq:sigma^w}, $I$ is an identity matrix, and $r = \lim_{n\xrightarrow[]{}\infty} m/n$. From the proof of Theorem~\ref{thm:clt}, we have
\begin{align*}
    \hat{\beta} &= \argmin_{\substack{\beta: \beta^{\intercal}1 = 1 \\ \beta \geq 0}} \frac{1}{T-K}\frac{1}{L} \sum_{t>K}^{T}\sum_{\ell=1}^L \left(\sum_{k=1}^K \beta_k \cdot m\left(\hat{\mathbb{E}}^t[\phi_{\ell}] - \hat{\mathbb{E}}^{t-k}[\phi_{\ell}]\right)\right)^2 \\
    &= \argmin_{\substack{\beta: \beta^{\intercal}1 = 1 \\ \beta \geq 0} } \beta^{\intercal} \left( \frac{1}{T-K}\frac{1}{L} \sum_{t>K}^{T}\sum_{\ell=1}^L  Z_{t}^{\ell}{Z_{t}^{\ell}}^{\intercal} +o_p(1) \right) \beta
\end{align*}
where $Z_{\ell}^t \sim N(0, \Sigma^W + r\cdot I + r\cdot 11^{\intercal})$. Here,  $r \cdot 11^{\intercal}$ comes from having  $\hat{\mathbb{E}}^t$ instead of $\mathbb{E}^t$. Note that for $\ell \neq \ell'$, $Z_{\ell}^t$ and $Z_{\ell'}^t$ are independent which can also be checked in the proof of Theorem~\ref{thm:clt}. 

With fixed $T$ and $L \xrightarrow[]{} \infty$, we have
\begin{equation*}
    \frac{1}{L}\sum_{\ell=1}^L Z_{t}^{\ell}{Z_{t}^{\ell}}^{\intercal} \xrightarrow[]{p} \Sigma^W + r\cdot I + r\cdot 11^{\intercal} + o_p(1).
\end{equation*}
Note that $\beta^{\intercal} (11^{\intercal}) \beta = 1$ with the linearity constraint. Therefore, 
\begin{equation}\label{eq:hat-quad}
    \hat{\beta} = \argmin_{\substack{\beta: \beta^{\intercal}1 = 1 \\ \beta \geq 0}} \beta^{\intercal} \hat{\Tilde{\Sigma}}\beta
\end{equation}
where $\hat{\Tilde{\Sigma}} \xrightarrow[]{p} \Tilde{\Sigma}$. Now let's fixed $L$ and send $T \xrightarrow[]{} \infty$. Note that for each $\ell, i, j$
\begin{align*} 
    & \exists C \text{ such that } \text{Var}(Z_{t,i}^{\ell} Z_{t,j}^{\ell}) \leq C  \text{ for all } t, 
    \\ &\lim_{|t-t'| \xrightarrow[]{}\infty} Cov(Z_{t, i}^{\ell}Z_{t, j}^{\ell}, Z_{t', i}^{\ell} Z_{t', j}^{\ell}) \xrightarrow[]{} 0. 
\end{align*}

Then, by Bernstein's WLLN, we have
\begin{equation*}
    \frac{1}{T-K}\sum_{t>K}^T Z_{t}^{\ell}{Z_{t}^{\ell}}^{\intercal} \xrightarrow[]{p} \Sigma^W + r\cdot I + r\cdot 11^{\intercal} + o_p(1).
\end{equation*}
Then, again we have \eqref{eq:hat-quad} with $\hat{\Tilde{\Sigma}} \xrightarrow[]{p} \Tilde{\Sigma}$.

Let $C: \{\beta: \beta^{\intercal}1 = 1, \beta \geq 0\}$. Define
\begin{equation*}
    \epsilon = \sup_{\beta \in C} |\beta^{\intercal}(\hat{\Tilde{\Sigma}} - \Tilde{\Sigma})\beta|. 
\end{equation*}
Note that
\begin{equation*}
{\beta^*}^{\intercal}\Tilde{\Sigma}\beta^* \leq {\hat{\beta}}^{\intercal} \Tilde{\Sigma} \hat{\beta} \leq \hat{\beta}^{\intercal} \hat{\Tilde{\Sigma}} \hat{\beta} + \epsilon  \leq {\beta^*}^{\intercal} \hat{\Tilde{\Sigma}} \beta^* + \epsilon \leq {\beta^*}^{\intercal}\Tilde{\Sigma} \beta^* + 2\epsilon.
\end{equation*}
With $\hat{\Tilde{\Sigma}} \xrightarrow[]{p} \Sigma$, we have $\epsilon = o_p(1)$ as  $C$ is a compact set. Therefore, we have that 
\begin{equation*}
    \Tilde{\delta}^2(\hat{\beta})  =  {\hat{\beta}}^{\intercal} \Tilde{\Sigma} \hat{\beta} = {\beta^*}^{\intercal}\Tilde{\Sigma}\beta^* + o_p(1) = \Tilde{\delta}^2(\beta^*) + o_p(1).
\end{equation*}
This completes the proof. 
\end{proof}

\end{document}